\documentclass[draftcls, onecolumn, 12pt]{IEEEtran}

\usepackage{graphicx}
\usepackage{cite}
\usepackage{amsmath}
\usepackage{amssymb}
\usepackage{latexsym}
\usepackage[]{fontenc}
\usepackage{psfrag,array}
\usepackage[usenames]{color}
\usepackage{tikz}
\usepackage{flushend}

\newcommand{\mythanks}[1]{
  \renewcommand{\thefootnote}{}
  \footnote{#1}
  \renewcommand{\thefootnote}{\arabic{footnote}}
}

\definecolor{white}{rgb}{1,1,1}
\bibliographystyle{IEEE}

\newcommand{\transp}{^{\rm T}}
\newcommand{\p}[1]{\mathop{\mbox{\it p} } }
\newcommand{\be}{\begin{equation}}
\newcommand{\ee}{\end{equation}}
\newcommand{\ba}{\begin{array}}
\newcommand{\ea}{\end{array}}
\newcommand{\bea}{\begin{eqnarray}}
\newcommand{\eea}{\end{eqnarray}}
\newcommand{\bean}{\begin{eqnarray*}}
\newcommand{\eean}{\end{eqnarray*}}

\renewcommand{\vec}{\mathbf}

\newtheorem{theorem}{Theorem}
\newtheorem{definition}{Definition}

\newtheorem{corollary}{Corollary}

\long\def\symbolfootnote[#1]#2{\begingroup%
\def\thefootnote{\fnsymbol{footnote}}\footnote[#1]{#2}\endgroup}

\begin{document}

\title{A Lattice-Theoretic Characterization of Optimal Minimum-Distance Linear Precoders}

\author{
D\v{z}evdan Kapetanovi\'{c}$^{\dag}$, Hei Victor Cheng$^{\ddag}$, Wai Ho Mow$^{\ddag}$, Fredrik Rusek$^{\dag}$
\\ $^{\dag}$ Department of Electrical and Information Technology, Lund University\\
P. O. Box 118, 22100 Lund, Sweden \\
E-mail: \{dzevdan.kapetanovic,fredrik.rusek\}@eit.lth.se \\
\ddag Department of Computer and Electronic Engineering \\
Hong Kong University of Science and Technology \\
Clear Water Bay, Kowloon, Hong Kong \\ 
E-mail: \{eechenghei,eewhmow\}@ust.hk
}

\maketitle

\mythanks{
 The first and the fourth author were supported by the Swedish Foundation for Strategic Research through its
Center for High Speed Wireless Communication at Lund University, Sweden.
The second and third authors were supported by the AoE grant E-02/08 from the University Grants Committee of the Hong Kong Special Administration Region, China.}

\begin{abstract}
This work investigates linear precoding over non-singular linear channels with additive white Gaussian noise, with lattice-type inputs.
The aim is to maximize the minimum distance of the received lattice points, where the precoder is subject to an energy constraint. It is shown that the optimal precoder only produces a finite number of different lattices, namely \emph{perfect lattices}, at the receiver. The well-known densest lattice packings are instances of perfect lattices, however it is analytically shown that the densest lattices are not always the solution. This is a counter-intuitive result at first sight, since previous work in the area showed a tight connection between densest lattices and minimum distance. Since there are only finitely many different perfect lattices, they can theoretically be enumerated off-line. A new upper bound on the optimal minimum distance is derived, which significantly improves upon a previously reported bound. Based on this bound, we propose an enumeration algorithm that produces a finite codebook of optimal precoders.
\end{abstract}


\section{Introduction}
\label{intro}
Linear channel models are very common in communications, and they describe several modern communication systems. Examples are multiple antenna systems (MIMO) and OFDM transmission. MIMO gained enormous attention with the seminal work in \cite{Telatar}, showing the increase of capacity with the minimum number of antennas at the communicating terminals. However, achieving this rate requires a Gaussian alphabet and Waterfilling (WF), where the latter requires perfect knowledge of the singular values and the right unitary matrix of the channel. From an application point of view, Gaussian alphabets are not practical, and instead discrete alphabets are used in practice. For discrete alphabets, WF does not longer achieve the maximal information rate\footnote{The information rate between two discrete sequences $\vec{y}$ and $\vec{x}$ is the quantity $I(\vec{y};\vec{x}) = H(\vec{y}) - H(\vec{y}|\vec{x})$, where $\vec{H}\{\cdot\}$ is the entropy operator for discrete sequences, while capacity is $\max_{p(\vec{x})} I(\vec{y};\vec{x})$, where the maximization is over all probability mass functions $p(\vec{x})$.}, as demonstrated in \cite{LozanoTulino}. Instead, \cite{LozanoTulino} derived a technique similar to WF, called Mercury/Waterfilling (MWF), which is the optimal power loading for discrete signal constellations. After the eigenmodes of the channel have been accessed, the power loading corresponds to applying a diagonal precoding matrix to the data vector. However, it is possible to achieve higher information rates than offered by MWF by applying a non-diagonal precoding matrix, as demonstrated in \cite{Perez-cruz}. A numerical technique that attempts to find the optimal precoder for maximizing the information rate for a specific MIMO channel was presented in that work, but it was not possible to prove the optimality of the technique. Recently, \cite{Xiao} presented a numerical algorithm that converges to the precoder maximizing the information rate. The bottleneck in the algorithm is the severe complexity arising from computation of the MMSE matrix at every step. 

Once the discrete data alphabet is large and structured, lattice theory naturally comes into focus, since the received signaling points will be organized as a lattice. Lattice theory is a subject that has long been studied within information theory, with classical works such as \cite{deBuda1,deBuda2,Forney,Conwayquant}. In \cite{deBuda1,deBuda2}, it was shown that there exist codes achieving the capacity over AWGN channels, whose codewords are instances of lattice points. The work in \cite{Forney} categorized the overall gain of a code as a coding gain and a shaping gain, both determined by the lattice structure that is used. A different application is \cite{Conwayquant}, where the two dimensional lattice structure that minimizes the quantization error was found. This formulation gives rise to an optimization problem over lattices, and the optimal structure is the well-known hexagonal lattice. An application of lattice theory, related to our work, is \cite{Svante}. Therein, lattice theory was used to construct precoders for linear channels that increase the minimum distance of the received signaling constellation. Namely, it was suggested that the precoder should organize the received points as the densest lattice packing. Based on this assumption, precoders with good bit-error-rate (BER) performance under maximum likelihood (ML) decoding were derived. Another, more recent, discovery shows the relationship between information rate and minimum distance \cite{Perez-cruz,GC11}. Namely, the linear precoder that maximizes the information rate in the high SNR regime for discrete alphabets, also maximizes the minimum distance of the received signaling points. Hence, the precoder maximizing the minimum distance is at the same time minimizing the BER and maximizing the information rate at high SNRs, which is a very interesting relationship. Thus, searching for precoders which maximize the minimum distance of the received signaling points, is a fundamental problem both from a practical and theoretical point of view. 

We follow the problem formulation in \cite{Svante}, and study the lattice structures that the optimal minimum distance precoder gives rise to. Some initial progress in this direction was made in \cite{KCMR11}, which investigated two-dimensional real-valued linear systems with a lattice alphabet at the transmitter, and showed that the optimal precoding matrix always produces the hexagonal lattice at the receiver. The work in \cite{KCMR11c} extended this to complex-valued lattices, and showed that the optimal received lattice, when extended to real-valued space, is always the Schl\"{a}fli lattice $D_4$ (also known as the checker-board lattice). This work extends the two-dimensional results to arbitrary dimensions. It will be shown that there are only finitely many optimal structures in an arbitrary dimension. 

The paper is organized as follows. In Section \ref{probback} we define the system model and the problem of interest. The lattice theoretic formulation of our problem is given in Section \ref{lattappsec}, along with a set of necessary mathematical tools. In Section \ref{probsolsec}, we show analytically which optimal lattices that solve our problem. Bounds are developed in Section \ref{boundssec}, while Section \ref{optZsec} presents algorithms for finding the optimal solution and also suboptimal precoder constructions, based on the results in Sections \ref{probsolsec} and \ref{boundssec}. Finally, Section \ref{conclsec} concludes our work.

\section{Problem Background}
\label{probback}
\subsection{System Model}
We start with some notation used throughout this work. $\mathbb{E}\{\cdot\}$ denotes the expectation operator. $\mathcal{R}\{\cdot\},\mathcal{I}\{\cdot\}$ denote the real and imaginary parts of a matrix, respectively. $\mathbb{R}$, $\mathbb{C}$ and $\mathbb{Z}$ denote the real-valued, complex-valued, and integer-valued numbers, respectively. Similarly, $\mathbb{R}^{M\times N}$, $\mathbb{C}^{M\times N}$ and $\mathbb{Z}^{M\times N}$ denote the spaces of real-valued, complex-valued and integer-valued $M\times N$ matrices, respectively. $\mathcal{S}^{N\times N}$ denotes the space of $N\times N$ symmetric matrices and $\mathcal{S}_{\succ 0}^{N\times N}$ denotes the cone of positive definite matrices. $\mathbb{Z}[i]$ stands for the set of Gaussian integers. $\mathbb{R}^N$, $\mathbb{C}^N$ and $\mathbb{Z}^N$ denote the $N$-dimensional space of real-valued, complex-valued and integer-valued vectors, respectively. Further, $\vec{I}_N$ is the $N\times N$ identity matrix, $\vec{0}_N$ the all zero $N$-dimensional vector, $\vec{0}_{N\times N}$ the $N\times N$ all-zero matrix, $(\cdot)^{\transp}$ matrix transpose, $(\cdot)^{\ast}$ Hermitian transpose, $\mathrm{tr}(\cdot)$ the trace of a matrix and $\|\vec{x}\|= \sqrt{\vec{x}^{\ast}\vec{x}}$ the Frobenius norm of the vector $\vec{x}$. The $j$:th column of a matrix $\vec{A}$ is denoted by $\vec{a}_j$ and the element at position $(i,j)$ in $\vec{A}$ is denoted by $a_{ij}$. For a vector $\vec{x}$, the $j$:th element is denoted by $x_j$. 

The communication model studied in this work is 
\bea
\label{mod1}
\vec{y} & = & \vec{HFa} + \vec{n} 
\eea
where $\vec{H} \in \mathbb{C}^{M\times N}$ is the channel matrix and $\vec{F} \in \mathbb{C}^{N\times N}$ is a precoding matrix satisfying the energy constraint $\mathrm{tr}(\vec{F}\vec{F}^{\ast}) \leq P_0$ for some $P_0 > 0$. $\vec{x} = \vec{Fa}$ is the transmitted symbol vector, whereas $\vec{a}$ is the precoded $N\times 1$ column vector comprising $N$ uncorrelated symbols, $a_j \in \mathbb{Z}[i]$, with a probability mass function $p(\vec{a})$ such that $\mathbb{E}\{\vec{a}\vec{a}^{\ast}\} = \vec{I}_N$. In other words, the elements $a_j$ are crafted from the Gaussian integers and they are uncorrelated, which represents an infinite QAM constellation. 
Finally, $\vec{n} \sim \mathcal{CN}(\vec{0}_M,\vec{I}_M)$ is complex-valued, circulary symmetric, white Gaussian noise and $\vec{y}$ is the received vector.

The aim of the paper is to design $\vec{F}$ such that the minimum distance of the received signaling points is maximized. First, we transform \eqref{mod1} into an equivalent real-valued model. Any complex valued $M\times N$ matrix $\vec{A}_c$ is isomorphic to a real-valued $2M\times 2N$ matrix $\vec{A}_r$, with the isomorphy function being
\begin{equation}
\label{realmattrans}
\vec{A}_c \mapsto \vec{A}_r =\left(\begin{array}{cc} \mathcal{R}\{\vec{A}_c\} & \mathcal{I}\{\vec{A}_c\}\\-\mathcal{I}\{\vec{A}_c\}&\mathcal{R}\{\vec{A}_c\}\end{array}\right).
\end{equation}
For $N$-dimensional complex-valued vectors $\vec{x}_c$, the isomorphy function is
\begin{equation}
\label{realvectrans}
\vec{x}_c \mapsto \vec{x}_r = \left(\begin{array}{c} \mathcal{R}\{\vec{x}_c\} \\ \mathcal{I}\{\vec{x}_c\}\end{array}\right),
\end{equation}
where $\vec{x}_r$ has dimension $2N$.
Applying these transformations to the matrices and vectors in \eqref{mod1}, we arrive at a real-valued signaling model
\be
\label{mod2}
\vec{y}_r = \vec{H}_r\vec{F}_r\vec{a}_r + \vec{n}_r.
\ee
Thus, without loss of generality, we can work with the model in \eqref{mod2} rather than \eqref{mod1}. Since $\vec{a}$ can be any Gaussian integer vector of dimension $N$, the real-valued vector $\vec{a}_r$ can be any integer vector of dimension $2N$. Further, it holds that $\mathrm{tr}(\vec{F}_r^{\transp}\vec{F}_r) = 2\mathrm{tr}(\vec{F}^{\ast}\vec{F}) \leq 2P_0$. The precoding is now performed over the real-valued domain as $\vec{x}_r = \vec{F}_r\vec{a}_r$, where the actual complex-valued symbols $\vec{x}$ to be transmitted over $\vec{H}$ in \eqref{mod1} are obtained from $\vec{x}_r$ through the inverse of \eqref{realvectrans}. Note that the transformation in \eqref{realmattrans} imposes a skew-symmetric structure on $\vec{F}_r$, which can be relaxed when the precoding is performed in the real-valued domain, i.e.,  $\vec{F}_r$ can be any $2N\times 2N$ real-valued matrix satisfying the trace constraint. This is not true for $\vec{H}_r$, since it must represent a complex-valued channel. However, since our analysis is applicable to general non-singular $\vec{H}_r$, we will drop the the skew-symmetric constraint on $\vec{H}_r$. Thus, by precoding over the real-valued domain, performance gains can be expected because there are more degrees of freedom in designing $\vec{F}_r$ than in designing $\vec{F}$. Henceforth, we omit the subscript $r$ and assume that all variables in $N$ dimensions are real-valued, unless stated otherwise.

Let $\vec{e} = \vec{a} - \hat{\vec{a}}$ be the difference vector of two data vectors $\vec{a} \not = \hat{\vec{a}}$. The squared minimum distance is $D^2_{\min}(\vec{H}\vec{F}) = \min_{\vec{e} \not = \vec{0}_N} \|\vec{HF}\vec{e}\|^2$. Thus, the problem of interest in this work is the following optimization problem
\be
\label{probgen}
\begin{gathered}
\max_{\vec{F}}D^2_{\min}(\vec{H}\vec{F}) \\
\textnormal{subject to}\\
\begin{aligned}
\mathrm{tr}(\vec{F}^{\ast}\vec{F}) \leq P_0.
\end{aligned}
\end{gathered}
\ee

Let $\vec{H} = \vec{U}\vec{S}\vec{V}^{\transp} \in \mathbb{R}^{N\times N}$ be the singular value decomposition (SVD) of $\vec{H}$. We study non-singular $\vec{H}$, thus we can assume that $\vec{S}$ has $N$ positive diagonal elements. Since $\vec{U}$ is merely a rotation of the received signaling points, it has no effect on the minimum distance and can be discarded from the problem formulation. Also, the matrix $\vec{V}^{\transp}$ can be absorbed into $\vec{F}$ without changing the transmitted power. Thus, equivalently, we consider the following channel model
\be
\label{mod3}
\vec{y} = \vec{SFa} + \vec{n},
\ee
for which the optimization problem to be studied in this paper becomes: \be \label{probgen2}\max_{\vec{F}}D^2_{\min}(\vec{S}\vec{F}) \,\,\textnormal{subject to}\,\, \mathrm{tr}(\vec{F}\vec{F}^T) \leq P_0.\ee Another formulation of this optimization is possible, by observing that the precoder $\vec{F}$ solving this optimization is the same precoder that minimizes the trace for a fixed value of the objective function $D^2_{\min}(\vec{S}\vec{F})$. We can write $$D^2_{\min}(\vec{S}\vec{F}) = \min_{\vec{e} \not= \vec{0}_N}\|\vec{SFe}\|^2 = \min_{\vec{e} \not= \vec{0}_N}\vec{e}^T\vec{F}^T\vec{S}^2\vec{Fe} = \min_{\vec{e} \not= \vec{0}_N}\vec{e}^T\vec{G}\vec{e},$$ where $\vec{G} \stackrel{\triangle}{=} \vec{F}^T\vec{S}^2\vec{F}$. A fixed value $D^2_{\min}(\vec{S}\vec{F}) = d$ of the objective function implies that $\vec{e}^T\vec{Ge} \geq d$, $\forall \vec{e} \not = \vec{0}_N$, where equality is achieved for at least one integer vector $\vec{e}$. Since the objective and the constraint function in \eqref{probgen2} are homogeneous of degree 2, we can assume that $d = 1$. Thus, an equivalent formulation of the optimization is 
\be
\label{prob2}
\begin{gathered}
\min_{\vec{F}}\mathrm{tr}(\vec{F}\vec{F}^T)\\
\textnormal{subject to}\\
\begin{aligned}
\vec{e}^T\vec{G}\vec{e} \geq 1 \quad \forall \vec{e} \in \mathbb{Z}^N\backslash\{\vec{0}_N\},
\end{aligned}
\end{gathered}
\ee
where $\mathbb{Z}^N\backslash\{\vec{0}_N\}$ is the set of all $N$-dimensional integer vectors except the all-zero vector.
Yet another equivalent way of expressing \eqref{prob2} is to maximize the \emph{normalized minimum distance} $d^2_{\min}(\vec{S},\vec{F}) \stackrel{\triangle}{=} D^2_{\min}(\vec{S},\vec{F})/\mathrm{tr}(\vec{F}\vec{F}^{\transp})$ over $\vec{F} \not = \vec{0}_{N\times N}$. For our purposes, the problem formulation in \eqref{prob2} will turn out to be the most convenient, and will be the one studied in this paper. In the next section, we formulate \eqref{prob2} as a pure lattice problem, and introduce the tools from lattice theory needed to analyze it.

\section{Lattice-Theoretic Approach}
\label{lattappsec}
This section is split into five parts. Section \ref{lattintro} briefly presents basic lattice theory, while Section \ref{ryshkovsec} describes the Ryshkov polytope and Section \ref{minkredsec} the Minkowski polytope, both of fundamental importance for the understanding of our subsequent analysis. Section \ref{lattprobsec} formulates \eqref{prob2} as a lattice problem, while Section \ref{classlatt} gives an overview of famous lattice problems and techniques, applicable to our problem, to solve them.

We will use some terms from convex geometry in what follows. By an $N$-dimensional \emph{polyhedral cone}, we mean the set $\{\lambda_1\vec{v}_1 + \ldots \lambda_k\vec{v}_K\,\, : \,\, \lambda_j \geq 0,\,1 \leq j \leq K,\}$ for $K$ given $N$-dimensional points $\vec{v}_1,\ldots,\vec{v}_K$. A \emph{polytope} in $N$ dimensions is the intersection of a finite number of $N$-dimensional halfspaces\footnote{By a half-space we mean either of the two parts into which a hyperplane divides a Euclidean space.}, i.e., the set of $N$-dimensional points $\{\vec{x}\,\,:\,\,a_{j,1}x_1 + \ldots a_{j,N}x_N \leq b_j\,\,:\,\,1\leq j \leq M\}$ for given numbers $M$, $a_{j,i}$, $b_j$, $1 \leq i \leq N$, $1 \leq j \leq M$. A \emph{face} of a polytope is the intersection between the polytope and a supporting hyperplane\footnote{A supporting hyperplane of a set $\mathcal{S}$ is a hyperplane that intersects $\mathcal{S}$, such that $\mathcal{S}$ is completely contained in one of the two halfspaces determined by the hyperplane.} of the polytope. If the face is one-dimensional, we call it an \emph{edge}, or in the case when the polytope is a polyhedral cone, an \emph{extreme ray}.
\subsection{Lattices}
\label{lattintro}
\label{lattices}
Let $\vec{L} \in \mathbb{R}^{N\times N}$ and let the columns of $\vec{L}$ be denoted by $\vec{l}_1,\ldots,\vec{l}_N$. A lattice $\Lambda_{\vec{L}}$ is the set of points 
\be
\label{latticeDef}
\Lambda_{\vec{L}} = \{\vec{Lu} \,\,:\,\, \vec{u} \in \mathbb{Z}^{N}\}.
\ee
In \eqref{latticeDef}, $\vec{u}$ is an integer vector and $\vec{L}$ is called a \emph{generator matrix} for the lattice $\Lambda_{\vec{L}}$. The \emph{squared minimum distance} of $\Lambda_{\vec{L}}$ is defined as: $$D^2_{\min}(\vec{L}) = \min_{\vec{u}\not=\vec{v}}\|\vec{L}(\vec{u}-\vec{v})\|^2 = \min_{\vec{e}\not= \vec{0}_N}\|\vec{L}\vec{e}\|^2 = \min_{\vec{e}\not=\vec{0}_N}\vec{e}^{\transp}\vec{G}_{\vec{L}}\vec{e},$$ where $\vec{u},\vec{v}$ and $\vec{e} = \vec{u} - \vec{v}$ are integer vectors and $\vec{G_{\vec{L}}}$ is the Gram matrix for the lattice $\Lambda_{\vec{L}}$. The \emph{fundamental volume} is $\mathrm{Vol}(\Lambda_{\vec{L}}) = |\det(\vec{L})|$, i.e., it is the volume spanned by $\vec{l}_1,\ldots,\vec{l}_N$. Let $\vec{p}_j$ denote a lattice point in $\Lambda_{\vec{L}}$. A Voronoi region around a lattice point $\vec{p}_j$ is the set $\mathcal{V}_{\vec{p}_j}(\Lambda_{\vec{L}}) = \{\vec{w}\,\,:\,\, \|\vec{w} - \vec{p}_j\| \leq \|\vec{p}_k-\vec{w}\|,\,\, \vec{p}_k \in \Lambda_{\vec{L}}\}$. Due to the symmetry of a lattice, it holds that $\mathcal{V}_{\vec{p}_j}(\Lambda_{\vec{L}}) = \vec{p}_j + \mathcal{V}_{\vec{0}_N}(\Lambda_{\vec{L}})$. The Voronoi region around $\vec{0}_N$ is denoted $\mathcal{V}(\Lambda_{\vec{L}})$.

As can be seen from the definition of $\Lambda_{\vec{L}}$, the column vectors $\vec{l}_1,\ldots,\vec{l}_N$ form a \emph{basis} for the lattice. There are infinitely many bases for a lattice. Assume that $\vec{L}'$ is another basis for $\Lambda_{\vec{L}}$. It holds that $\vec{L}' = \vec{LZ}$, where $\vec{Z}$ is a unimodular matrix, i.e., $\vec{Z}$ has integer entries and $\det(\vec{Z}) = \pm 1$ \cite{CS88}. Hence, the generator matrix $\vec{L}'$ generates the same lattice as $\vec{L}$, i.e., $\Lambda_{\vec{L}} \equiv \Lambda_{\vec{L}'}$ where $\equiv$ denotes equality between sets. Two Gram matrices $\vec{G}_{\vec{L}_1} = \vec{L}^{\transp}_1\vec{L}_1$ and $\vec{G}_{\vec{L}_2} = \vec{L}^{\transp}_2\vec{L}_2$ are \emph{isometric} if there exists a unimodular $\vec{Z}$ and a constant $c$ such that $\vec{G}_{\vec{L}_1} = c\vec{Z}^{\transp}\vec{G}_{\vec{L}_2}\vec{Z}$. Geometrically, this means that $\vec{L}_1$ and $\vec{L}_2$ are the same lattice up to rotation and scaling of the basis vectors.

From the definition of the different lattice measures, it follows that 
\be
\label{dminequiv}
D^2_{\min}(\Lambda_{\vec{QLZ}}) = D^2_{\min}(\vec{L})
\ee
where $\vec{Q}$ is any orthogonal matrix. Similarly, $\mathrm{Vol}(\Lambda_{\vec{QLZ}}) = \mathrm{Vol}(\Lambda_{\vec{L}})$.

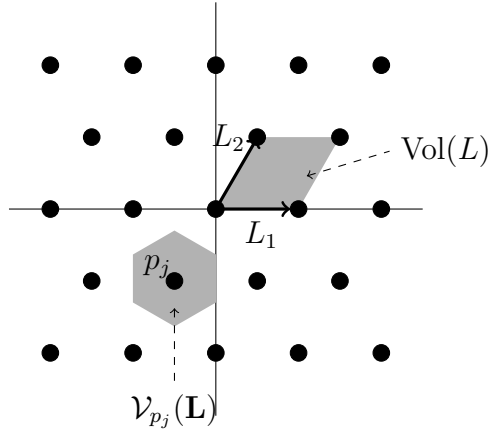
\begin{figure}
\begin{center}
\begin{tikzpicture}[scale=1.1]
\draw[thin] (-2.5,0) -- (2.5,0); \draw[thin] (0,-2.5) -- (0,2.5);
\fill[white!70!black] (-0.5,-0.84) +(90:0.58) -- +(150:0.58) -- +(210:0.58) -- +(270:0.58) --
+(330:0.58) -- +(390:0.58) -- cycle;
\draw[<-,dashed] (-0.5,-1.2) -- (-0.5,-2.1) node[anchor=north]{$\mathcal{V}_{p_j}(\vec{L})$};
\fill[white!70!black] (0,0) -- (0.5,0.87) -- (1.5,0.87) -- (1,0) --  cycle;
\draw[<-,dashed] (1.1,0.4) -- (2.1,0.7)  node[anchor=west]{$\mathrm{Vol}(L)$};
\filldraw[->,fill=green, very thick] (0,0) -- (0.9,0) node[anchor=north east] {$L_1$};
\filldraw[->,fill=green,very thick] (0,0) -- (0.5,0.86) node[anchor=east] {$L_2$};
\foreach \x in {-2,...,2}{          
            \filldraw[fill=black] (\x,0) circle (0.1cm);
	\foreach \y in {-1.74,1.74}
		\filldraw[fill=black] (\x,\y) circle (0.1cm);
}
\foreach \x in {-1.5,-0.5,...,1.5}      
	\foreach \y in {-0.87,0.87}
		\filldraw[fill=black] (\x,\y) circle (0.1cm);
\node at(-0.7,-0.7){$p_j$};
\end{tikzpicture}
\end{center}
\caption{The hexagonal lattice depicted with a geometrical description of the introduced lattice quantities.}
\end{figure}

\subsection{Ryshkov Polytope}
\label{ryshkovsec}
Let $\Lambda_{\vec{L}} \subset \mathbb{R}^{N}$ be a lattice with a generator matrix $\vec{L}$ and with $D^2_{\min}(\vec{L}) \geq \lambda$. This can be written as an infinite set of inequalities $\vec{e}^T\vec{G}_{\vec{L}}\vec{e} \geq \lambda$, where $\vec{e} \in \mathbb{Z}^{N}/\{\vec{0}_N\}$ and $\vec{G}_{\vec{L}} = \vec{L}^{\transp}\vec{L}$. Since $\vec{G}_{\vec{L}}$ is a symmetric matrix, its dimension is $N(N+1)/2$, and the infinite set of inequalities are linear over the $N(N+1)/2$ distinct elements of $\vec{G}$. By considering the distinct elements in $\vec{G}$ as a vector $(g_{1,1},\ldots,g_{1,N},g_{2,2},\ldots,g_{2,N},\ldots,g_{N,N})$ in $\mathbb{R}^{N(N+1)/2}$, the infinite set of inequalities represent an intersection of infinitely many halfspaces in $\mathbb{R}^{N(N+1)/2}$. Next, we define \cite{Achill}
\vspace{0.2cm}
\begin{definition}
\label{ryshkdef}
The \emph{Ryshkov polytope} $\mathcal{R}_{\lambda}$ is the set $\mathcal{R}_{\lambda} \stackrel{\triangle}{=} \{\vec{G}: \vec{e}^{\transp}\vec{G}\vec{e} \geq \lambda,\,\, \vec{e}\in\mathbb{Z}^N/\{\vec{0}_N\}\}$.
\end{definition}  
\vspace{0.2cm}
It is easily realized that any $\vec{G} \in \mathcal{R}_{\lambda}$ is positive definite, thus $\mathcal{R}_{\lambda} \subset \mathcal{S}^{N\times N}_{\succ 0}$. In the vector space $\mathbb{R}^{N(N+1)/2}$, the set of positive definite matrices $\mathcal{S}^{N\times N}_{\succ 0}$ corresponds to a cone, where $\mathcal{R}_{\lambda}$ is contained in the interior of the cone. $\mathcal{R}_{\lambda}$ is a convex and unbounded set, since if $\vec{G}_1,\vec{G}_2 \in \mathcal{R}_{\lambda}$, then $k_1\vec{G}_1 + k_2\vec{G}_2 \in \mathcal{R}_{\lambda}$ for $k_1,k_2 \geq 0$ and $k_1 + k_2 \geq 1$. Because any positive definite Gram matrix $\vec{G}$, hereinafter called a ''positive quadratic form'' (PQF), corresponds to a lattice, the Ryshkov polytope contains all Gram matrices of lattices with minimum distance of at least $\lambda$. 

Since $\mathcal{R}_{\lambda}$ is the intersection of infinitely many halspaces, it could be the case that $\mathcal{R}_{\lambda}$ has a boundary that is ''curved'' and does not represent a polytope. More formally, there could exist a point on the boundary of $\mathcal{R}_{\lambda}$ for which there is only one support plane, which intersects $\mathcal{R}_{\lambda}$ only at this point. We say that an intersection of infinitely many halspaces $\mathcal{P} = \cap_{i = 1}^{\infty} \mathcal{H}_i$, is a \emph{locally finite polytope}, if the intersection of $\mathcal{P}$ and an arbitrary polytope is again a polytope. Thus, a locally finite polytope $\mathcal{P}$ contains no curved boundary. The following theorem \cite{Achill} justifies the name ''Ryshkov polytope'', and plays a fundamental role for the classification of optimal precoders that is developed in this work. 
\begin{theorem}
\label{ryshkovthm1}
For $\lambda > 0$, the set $\mathcal{R}_{\lambda}$ is a locally finite polytope.
\end{theorem}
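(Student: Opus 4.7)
The plan is to exploit the crucial inclusion $\mathcal{R}_{\lambda} \subset \mathcal{S}^{N\times N}_{\succ 0}$ together with a compactness / Rayleigh argument to show that on any bounded polyhedral region only finitely many of the defining halfspaces
\begin{equation*}
H_{\vec{e}} := \{\vec{G} \in \mathcal{S}^{N\times N} : \vec{e}^{\transp}\vec{G}\vec{e} \geq \lambda\}, \quad \vec{e} \in \mathbb{Z}^N\backslash\{\vec{0}_N\},
\end{equation*}
can fail to be redundant. Let $\mathcal{Q}$ be an arbitrary polytope. I would first reduce to a bounded polytope that is contained in the open PD cone: if $\mathcal{Q}\cap \mathcal{R}_{\lambda}=\emptyset$, an open-cover argument applied to a suitably large bounded polyhedral slice of $\mathcal{Q}$, using that the sets $\{\vec{G}\in \mathcal{Q}:\vec{e}^{\transp}\vec{G}\vec{e}<\lambda\}$ form an open cover indexed by $\vec{e}\in \mathbb{Z}^N\backslash\{\vec{0}_N\}$, extracts a finite subcollection whose corresponding halfspaces $H_{\vec{e}}$ already have empty intersection with $\mathcal{Q}$; if $\mathcal{Q}\cap \mathcal{R}_{\lambda}\neq\emptyset$, I would shrink $\mathcal{Q}$ to a closed bounded polyhedral neighborhood of the compact set $\mathcal{Q}\cap \mathcal{R}_{\lambda}$ still contained in $\mathcal{S}^{N\times N}_{\succ 0}$, which is possible because the PD cone is open and contains the compact set $\mathcal{Q}\cap\mathcal{R}_\lambda$.

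With $\mathcal{Q}$ now a closed bounded polytope inside $\mathcal{S}^{N\times N}_{\succ 0}$, the smallest-eigenvalue map $\mu_{\min}(\vec{G})$ is continuous and strictly positive on $\mathcal{Q}$, hence by compactness attains a strictly positive minimum $\mu^{\star} := \min_{\vec{G}\in \mathcal{Q}}\mu_{\min}(\vec{G})>0$. The Rayleigh inequality then gives
\begin{equation*}
\vec{e}^{\transp}\vec{G}\vec{e} \;\geq\; \mu^{\star}\|\vec{e}\|^2 \qquad \text{for all } \vec{G}\in \mathcal{Q},\; \vec{e}\in\mathbb{R}^N.
\end{equation*}
Defining the finite set $S := \{\vec{e}\in \mathbb{Z}^N\backslash\{\vec{0}_N\} : \|\vec{e}\|^2 \leq \lambda/\mu^{\star}\}$, any integer vector $\vec{e}\notin S$ satisfies $\vec{e}^{\transp}\vec{G}\vec{e}>\lambda$ throughout $\mathcal{Q}$, so $H_{\vec{e}} \supseteq \mathcal{Q}$ and is redundant. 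This collapses the infinite intersection defining $\mathcal{R}_{\lambda}$ to a finite one over $\mathcal{Q}$,
\begin{equation*}
\mathcal{Q} \cap \mathcal{R}_{\lambda} \;=\; \mathcal{Q} \cap \bigcap_{\vec{e}\in S} H_{\vec{e}},
\end{equation*}
which is an intersection of $\mathcal{Q}$ with finitely many halfspaces and hence a polytope in the sense defined in the paper.

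The step I expect to be the main obstacle is the reduction carried out in the first paragraph: an arbitrary polytope $\mathcal{Q}$ may be unbounded or extend far outside the PD cone, and one needs to argue that such directions contribute no new facets to $\mathcal{Q}\cap \mathcal{R}_{\lambda}$. Concretely, any recession direction of $\mathcal{Q}$ surviving in $\mathcal{Q}\cap \mathcal{R}_{\lambda}$ must itself be a PD recession direction along which all forms $\vec{e}^{\transp}\vec{G}\vec{e}$ only grow, so the finite set $S$ identified on a sequence of bounded polyhedral slices of $\mathcal{Q}$ stabilizes; this lets the local Rayleigh argument be promoted to a global finite description of $\mathcal{Q}\cap \mathcal{R}_{\lambda}$. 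Once this localization is in place, the remainder of the argument is a clean application of continuity of $\mu_{\min}$ and of the fact that the integer lattice $\mathbb{Z}^N$ has only finitely many vectors of bounded norm.
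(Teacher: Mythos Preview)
The paper does not supply a proof of this theorem; it is quoted from Sch\"urmann's monograph \cite{Achill}. Your argument is essentially the standard one found there: on any compact polytope lying inside the open cone $\mathcal{S}^{N\times N}_{\succ 0}$, the smallest-eigenvalue function is bounded away from zero, and the Rayleigh inequality then renders all but finitely many of the halfspaces $H_{\vec{e}}$ redundant. The core reasoning in your second paragraph is correct and is exactly how the result is established in the reference.

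Concerning the reduction you flag as the main obstacle: with the paper's literal definition of ``polytope'' (any finite intersection of halfspaces, possibly unbounded), the theorem is in fact false, so you are chasing a phantom. Taking $\mathcal{Q}$ to be the whole ambient space would force $\mathcal{R}_{\lambda}$ itself to be a finite intersection of halfspaces; but for every primitive $\vec{e}\in\mathbb{Z}^N$ the set $\{\vec{G}\in\mathcal{R}_{\lambda}:\vec{e}^{\transp}\vec{G}\vec{e}=\lambda\}$ is a genuine facet (conjugate $\vec{e}$ to the first standard basis vector by a unimodular matrix to see that it is nonempty and of codimension one), so $\mathcal{R}_{\lambda}$ has infinitely many facets. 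The intended reading, consistent with \cite{Achill}, is that $\mathcal{Q}$ ranges over \emph{bounded} polytopes. Under that reading your reduction from the first paragraph goes through cleanly: $\mathcal{Q}\cap\mathcal{R}_{\lambda}$ is compact and lies in the open cone $\mathcal{S}^{N\times N}_{\succ 0}$, so it has a compact polyhedral neighbourhood $\mathcal{Q}'\subset\mathcal{S}^{N\times N}_{\succ 0}$; replacing $\mathcal{Q}$ by $\mathcal{Q}\cap\mathcal{Q}'$ does not change the intersection with $\mathcal{R}_{\lambda}$ and places you squarely in the setting of your second paragraph. The recession-cone heuristics in your final paragraph are therefore unnecessary (and, as written, too vague to constitute a proof in any case). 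Finally, the open-cover manoeuvre for the empty case is more than you need: the empty set is already a polytope.
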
 
A vertex in $\mathcal{R}_{\lambda}$ corresponds to a form $\vec{G}$ that is the unique solution to a set of at least $N(N+1)/2$ linearly independent equations $\vec{e}_{j}^{\transp}\vec{G}\vec{e}_{j} = \lambda$, $j = 1 \ldots K$, where $K \geq N(N+1)/2$. Note that if $\vec{G}$ is a vertex in $\mathcal{R}_{\lambda}$, then so is $\vec{Z}^{\transp}\vec{G}\vec{Z}$ where $\vec{Z}$ is unimodular, and thus there is an infinite, but countable, number of vertices in the Ryshkov polytope. This observation also implies that the vertices can be partitioned into equivalence classes, where the equivalence relation is an isometry between two vertices. 

Lattices corresponding to vertices of $\mathcal{R}_{\lambda}$ are named \emph{perfect lattices} in the literature \cite{Martinet03}, and the corresponding Gram matrices are \emph{perfect forms}. The next theorem gives another interesting property of the Ryshkov polytope \cite{Martinet03}, which is important for our work. 
\vspace{0.2cm}
\begin{theorem}
\label{ryshkovthm2}
There are only finitely many non-isometric perfect forms in the Ryshkov polytope. 
\end{theorem}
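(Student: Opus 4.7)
My plan is to combine Theorem \ref{ryshkovthm1} (local finiteness of $\mathcal{R}_\lambda$) with a Minkowski-reduction argument showing that each isometry class of perfect forms contains a representative confined to a bounded region of $\mathcal{S}^{N\times N}$. Without loss of generality I fix $\lambda=1$ by scaling, since the isometry relation in the paper already absorbs the scale factor.

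First I would extract from the vertex condition that every perfect form admits $N$ linearly independent minimum vectors. A vertex $\vec{G}$ of $\mathcal{R}_1$ is the unique solution in $\mathcal{S}^{N\times N}$ of at least $N(N+1)/2$ linearly independent equations $\vec{e}_j^{\transp}\vec{G}\vec{e}_j=1$, equivalently $\mathrm{tr}(\vec{G}\,\vec{e}_j\vec{e}_j^{\transp})=1$. Unique solvability forces the symmetric rank-one matrices $\vec{e}_j\vec{e}_j^{\transp}$ to span the $N(N+1)/2$-dimensional space $\mathcal{S}^{N\times N}$, which in turn forces the vectors $\vec{e}_j$ themselves to span $\mathbb{R}^N$ (if they all lay in a proper subspace $V$, the span of $\vec{e}_j\vec{e}_j^{\transp}$ would have dimension at most $\dim(V)(\dim(V)+1)/2<N(N+1)/2$). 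Hence the $N$-th successive minimum of the associated lattice satisfies $m_N^2\leq 1$, and together with $m_1^2=1$ this gives $m_1=\cdots=m_N=1$.

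Next I would invoke the classical theory of Minkowski reduction: every lattice admits a Minkowski-reduced basis, so every isometry class contains a representative $\vec{G}'=\vec{Z}^{\transp}\vec{G}\vec{Z}$ (with $\vec{Z}$ unimodular) that is Minkowski-reduced. For such a form, the diagonal entries obey $g'_{ii}\leq c_N m_i^2$ and the off-diagonals are bounded by the diagonals (e.g.\ $|g'_{ij}|\leq g'_{ii}/2$), where $c_N$ depends only on $N$. Combined with $m_i=1$, this yields an a priori entry-wise bound $\|\vec{G}'\|\leq C_N$ depending only on $N$, so the set $\mathcal{M}$ of Minkowski-reduced perfect forms with minimum $1$ is contained in some bounded polytope $\mathcal{B}\subset\mathcal{S}^{N\times N}$.

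To conclude, I would apply Theorem \ref{ryshkovthm1}: the intersection $\mathcal{R}_1\cap\mathcal{B}$ is a genuine (bounded) polytope and hence has only finitely many vertices. Since every isometry class of perfect forms has at least one representative in $\mathcal{M}\subseteq\mathcal{R}_1\cap\mathcal{B}$, the number of isometry classes of perfect forms is finite. The main obstacle is the Minkowski-reduction step: it relies on the nontrivial inequalities of classical geometry of numbers that relate the entries of a reduced Gram matrix to the successive minima $m_i$, which are standard but lie outside the lattice preliminaries developed in the paper. A secondary subtlety is verifying that unique solvability of the vertex equations propagates from the spanning of the rank-one forms $\vec{e}_j\vec{e}_j^{\transp}$ in $\mathcal{S}^{N\times N}$ to the spanning of the $\vec{e}_j$ themselves in $\mathbb{R}^N$, which I would justify by the dimension-counting argument above.
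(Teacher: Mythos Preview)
The paper does not supply a proof of Theorem~\ref{ryshkovthm2}; it is quoted from \cite{Martinet03} without argument, so there is no in-paper proof to compare against.

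Your proposal is essentially the classical Voronoi--Minkowski argument and is correct in outline: force all successive minima of a perfect form to equal $1$ via the spanning condition on the minimum vectors, Minkowski-reduce to trap one representative per isometry class inside a bounded region, then invoke local finiteness (Theorem~\ref{ryshkovthm1}) to get finitely many vertices there. Two points worth tightening. First, the inequality $g'_{ii}\le c_N m_i^2$ for a Minkowski-reduced form is true but not immediate; it follows by combining $g'_{ii}\ge m_i^2$ (the first $i$ reduced basis vectors are linearly independent and ordered by length), the Minkowski inequality $\prod_i g'_{ii}\le c_N\det(\vec{G}')$, and Minkowski's second theorem $\prod_i m_i^2\ge c'_N\det(\vec{G}')$, so that each ratio $g'_{ii}/m_i^2\ge 1$ is bounded by the product bound. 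Second, you should make explicit that the Minkowski-reduced representative $\vec{Z}^{\transp}\vec{G}\vec{Z}$ is again a vertex of $\mathcal{R}_1$ (the paper states this just before the theorem), so your representatives are genuinely among the finitely many vertices of the bounded polytope $\mathcal{R}_1\cap\mathcal{B}$.
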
 
\vspace{0.2cm}
Hence, although there are infinitely many perfect forms in the Ryshkov polytope, Theorem \ref{ryshkovthm2} reveals that out of these, only finitely many are non-isometric and correspond to different lattices. 
The non-isometric perfect lattices have been tabulated for all dimensions up to $N = 8$ \cite{Achill}. In two and three dimensions, there is only one unique perfect lattice. In four dimensions, there are two, in five there are three, and in dimension $N = 8$ there are 10916.

\emph{Voronoi's algorithm}, is commonly used to traverse the vertices \cite{Martinet03} of the Ryshkov polytope. In Section \ref{optZsec} we present an algorithm for solving our problem, which in essence is a modified version of Voronoi's algoritm.

\subsection{Minkowski Polytope}
\label{minkredsec}
Another characterization of PQFs is via \emph{Minkowski reduction}. 
\begin{definition}
\label{minkreddef}
The Minkowski reduction region $\mathcal{M}$ is the set of all $\vec{G}$ satisfying
\bea
\label{minkred}
(i) & & \vec{v}^{\transp}\vec{G}\vec{v} \geq g_{i,i}, \quad \textnormal{for all $\vec{v} \in \mathbb{Z}^N$ such that $\mathrm{gcd}(v_i,\ldots,v_N) = 1.$} \nonumber \\
(ii) & & g_{i,i+1} \geq 0, \quad i = 1,\ldots,N-1.
\eea 
\end{definition}
As with the Ryshkov polytope, $\mathcal{M}$ is a subset of $\mathcal{S}_{\succ 0}^{N\times N}$.
A PQF $\vec{G} = \vec{L}^{\transp}\vec{L} \in \mathcal{M}$ is said to be \emph{Minkowski reduced} and the lattice generator matrix $\vec{L}$ is called a \emph{Minkowski reduced} generator matrix for $\Lambda_{\vec{L}}$. It can be shown that any lattice $\Lambda_{\vec{B}}$ has a generator matrix $\vec{L}$ that is Minkowski reduced, i.e., there exists an $\vec{L}$ such that $\vec{B} = \vec{LZ}$, where $\vec{L}$ is Minkowski reduced and $\vec{Z}$ is a unimodular matrix \cite{Min05}. Note that the Minkowski reduced generator matrix is not unique for a certain lattice, e.g., if $\vec{L}$ is Minkowski reduced, then so is $-\vec{L}$. However, it can be proved that there are only finitely many Minkowski reduced generator matrices for any lattice \cite{Min05}. Given a generator matrix $\vec{B}$, a Minkowski reduced generator matrix $\vec{L}$, and the corresponding unimodular matrix $\vec{Z}$, can both be obtained by applying the Minkowski reduction algorithm on $\vec{B}$ \cite{Min05}. 

Let $\vec{L}$ be a Minkowski reduced generator matrix and $\vec{G}_{\vec{L}} = \vec{L}^{\transp}\vec{L}$ the corresponding Minkowski reduced PQF. Condition (i) in \eqref{minkred} implies that $D^2_{\min}(\vec{L}) \geq g_{1,1}$, and since $g_{1,1} = \|\vec{l}_1\|^2$, it follows that $D^2_{\min}(\vec{L}) = g_{1,1}$, because at least $\vec{v} = (1\,0\,0\ldots 0)^{\transp}$ achieves equality. Hence, any Minkowski reduced generator matrix $\vec{L}$ contains the shortest vector in the lattice $\Lambda_{\vec{L}}$ as one of its columns. $\mathcal{M}$ is an intersection of infinitely many halfspaces, just as the Ryshkov polytope, but with different halfspaces in this case. It is easily seen that $\mathcal{M}$ corresponds to a cone in the vector space $\mathbb{R}^{N(N+1)/2}$, since if $\vec{G}_1, \vec{G}_2 \in \mathcal{M}$ , then $k_1\vec{G}_1 + k_2\vec{G}_2 \in \mathcal{M}$ for $k_1 \geq 0$ and $k_2 \geq 0$. We now define
\begin{definition}
\label{minkredlmbdef}
$\mathcal{M}_{\lambda} = \{\vec{G}\,\,:\,\, \vec{G} \in \mathcal{M},\,\, g_{1,1} \geq \lambda\}$.
\end{definition}
Hence, the Minkowski reduced PQFs in $\mathcal{M}_{\lambda}$  correspond to all Minkowski reduced generator matrices of lattices with a minimum distance of at least $\lambda$.

A polyhedral cone is a cone with a finite number of flat faces, and is therefore also a polytope. A fundamental result by Minkowski is \cite{Min05}
\begin{theorem}
\label{minkthm}
$\mathcal{M}$ is a polyhedral cone in $\mathbb{R}^{N(N+1)/2}$. 
\end{theorem}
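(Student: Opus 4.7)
The plan is to observe that every inequality in Definition~\ref{minkreddef} is linear and homogeneous in the entries of $\vec{G}$, so $\mathcal{M}$ is automatically a convex cone in $\mathbb{R}^{N(N+1)/2}$; the only thing to establish is that among the infinitely many inequalities in part (i), all but finitely many are redundant. Once that is done, $\mathcal{M}$ is cut out by a finite list of halfspaces through the origin and is therefore a polyhedral cone.

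First I would extract the ``easy'' consequences by substituting small integer vectors into condition (i). Taking $\vec{v}=\vec{e}_j$ with $j\ge i$ (for which $\gcd(v_i,\ldots,v_N)=1$) yields $g_{j,j}\ge g_{i,i}$, so the diagonal is non-decreasing: $g_{1,1}\le g_{2,2}\le\cdots\le g_{N,N}$. Taking $\vec{v}=\vec{e}_i\pm\vec{e}_j$ for $i<j$ yields $g_{i,i}+g_{j,j}\pm 2g_{i,j}\ge g_{i,i}$, whence $|g_{i,j}|\le g_{j,j}/2$. Together, these bound every off-diagonal entry in absolute value by half of the largest diagonal on its row or column.

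Next I would use these bounds to show that only integer vectors of bounded supremum-norm can produce active constraints. For a general $\vec{v}$ with $\gcd(v_i,\ldots,v_N)=1$, let $k^{\ast}=\max\{k:v_k\neq 0\}$, so $k^{\ast}\ge i$ and $|v_{k^{\ast}}|\ge 1$. Expanding
\[
\vec{v}^{\transp}\vec{G}\vec{v}=\sum_{k\le k^{\ast}} v_k^2\,g_{k,k}+2\sum_{k<l\le k^{\ast}}v_kv_l\,g_{k,l}
\]
and applying the uniform bounds $|g_{k,l}|\le g_{k^{\ast},k^{\ast}}/2$ and $g_{k,k}\le g_{k^{\ast},k^{\ast}}$, one may lower-bound $\vec{v}^{\transp}\vec{G}\vec{v}$ by $g_{k^{\ast},k^{\ast}}$ times a universal, dimension-dependent expression in $v_1,\ldots,v_{k^{\ast}}$ that grows quadratically with $\|\vec{v}\|$. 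Hence there exists a constant $M_N$ depending only on $N$ such that $\|\vec{v}\|_{\infty}>M_N$ automatically forces $\vec{v}^{\transp}\vec{G}\vec{v}\ge g_{k^{\ast},k^{\ast}}\ge g_{i,i}$, rendering the corresponding inequality redundant. Only the finitely many $\vec{v}$ with $\|\vec{v}\|_{\infty}\le M_N$ and $1\le i\le N$ then remain.

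The main obstacle is producing a clean, dimension-only quantitative estimate in the preceding step---$M_N$ must not depend on the particular $\vec{G}$, so redundancy must be established uniformly over the cone. Minkowski's classical argument handles this by induction on $k^{\ast}$, peeling off one coordinate at a time and repeatedly invoking the half-bound on off-diagonals to absorb the cross terms into the diagonal contribution of $v_{k^{\ast}}^2 g_{k^{\ast},k^{\ast}}$. Once the finite list of relevant inequalities from (i) is in hand, adjoining the $N-1$ sign conditions $g_{i,i+1}\ge 0$ from part (ii) yields a finite system of linear homogeneous inequalities whose solution set is precisely $\mathcal{M}$, establishing that $\mathcal{M}$ is a polyhedral cone.
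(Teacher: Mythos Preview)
The paper does not prove Theorem~\ref{minkthm}; it is quoted as a classical result of Minkowski with a citation to \cite{Min05}. So there is no in-paper argument to compare against, and your overall architecture---observe that every constraint is linear and homogeneous, hence $\mathcal{M}$ is a convex cone, then show that all but finitely many of the inequalities in~(i) are redundant---is precisely the classical route.

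That said, the central bounding step as you describe it does not go through. You propose to combine $g_{k,k}\le g_{k^{\ast},k^{\ast}}$ and $|g_{k,l}|\le g_{k^{\ast},k^{\ast}}/2$ to lower-bound $\vec{v}^{\transp}\vec{G}\vec{v}$ by $g_{k^{\ast},k^{\ast}}$ times a universal expression growing quadratically in $\|\vec{v}\|$. But $g_{k,k}\le g_{k^{\ast},k^{\ast}}$ is an \emph{upper} bound on the diagonals and contributes nothing to a lower bound on the positive terms $v_k^{2}\,g_{k,k}$. Concretely, in $N=2$ take $\vec{G}$ diagonal with $g_{1,1}=\epsilon$, $g_{2,2}=1$, and $\vec{v}=(M,1)^{\transp}$: then $\vec{v}^{\transp}\vec{G}\vec{v}/g_{k^{\ast},k^{\ast}}=M^{2}\epsilon+1$, which stays near $1$ as $\epsilon\to 0$ regardless of $M$, so no uniform quadratic growth is available. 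You rightly flag this as the main obstacle and point to Minkowski's inductive descent, but the description ``absorb the cross terms into $v_{k^{\ast}}^{2}g_{k^{\ast},k^{\ast}}$'' still misses the real difficulty, which is the earlier diagonal contributions $v_k^{2}g_{k,k}$ with potentially tiny $g_{k,k}$, not the cross terms. Your plan locates the crux accurately, but the specific estimate you write down would fail; the actual reduction argument (replacing $\vec{v}$ by a shorter admissible vector) is considerably more delicate than a direct inequality of this kind.
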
  
Theorem \ref{ryshkovthm1} is of later importance, since the polyhedral structure of $\vec{M}$ is crucial for the solvability of \eqref{prob2}.

It follows from Definition \ref{minkredlmbdef} that $\mathcal{M}_{\lambda}$ is the intersection of the hyperplane $\{\vec{G}\,:\,g_{1,1} = \lambda\}$ 
with $\mathcal{M}$, and from Theorem \ref{minkthm} we conclude that $\mathcal{M}_{\lambda}$ contains a finite number of vertices. The vertices are hereinafter denoted as \emph{Minkowski extreme forms}, and the corresponding lattices as \emph{Minkowski extreme lattices}. Minkowski extreme forms have been tabulated up to dimension 7, while for higher dimensions they are unknown since the computational complexity is tauntalizing. Compare this to perfect forms in the Ryshkov polytope, which are known up to dimension 8. This is due to the fact that enumerating perfect forms is computationally more tractable than enumerating Minkowski extreme forms \cite{Achill}. Ryshkov managed to show that every perfect form is equivalent to a form lying on an extreme ray of the Minkowski reduction region $\mathcal{M}$ \cite{Rys70}. Cohn et al., however, showed that there are extreme rays in the Minkowski reduction region that do not contain perfect forms \cite{CLR82}. Thus, every vertex (perfect form) of the Ryshkov polytope $\mathcal{R}_1$ can be reduced to a vertex in $\mathcal{M}_1$, but there are some extreme rays in $\mathcal{M}_1$ which contain PQFs in $\mathcal{R}_1$ that are not vertices of $\mathcal{R}_1$. Thus, since the Minkowski reduction region is different from the Ryshkov polytope, defining our optimization problem over it can provide additional insights to the properties of the optimal solution.

\subsection{Lattice-Based Problem Formulation}
\label{lattprobsec}
We are now ready to reformulate \eqref{prob2} as a pure lattice optimization problem. This will, for completeness, be done over the Ryshkov polytope as well as over the Minkowski reduction region. We begin with the former. Start by factorizing $\vec{F}$ as $\vec{F} = \vec{S}^{-1}\vec{UB}$, where $\vec{U}$ is an orthogonal matrix and $\vec{B}$ is any matrix such that $\vec{F}$ satisfies the trace constraint $\mathrm{tr}(\vec{F}^{\transp}\vec{F})\leq P_0$. From lattice theory, it follows that $\vec{B}$ can be regarded as a generator matrix for a lattice $\vec{\Lambda}_{\vec{B}}$. Inserting the expression $\vec{F} = \vec{S}^{-1}\vec{UB}$ into \eqref{prob2}, we arrive at
\be
\label{ryshproblatt}
\begin{gathered}
\min_{\vec{U},\vec{B}} \mathrm{tr}(\vec{B}^{\transp}\vec{U}^{\transp}\vec{S}^{-2}\vec{UB})\\
\textnormal{subject to}\\
\begin{aligned}
\vec{G}_{\vec{B}} \in \mathcal{R}_{1},
\end{aligned}
\end{gathered}
\ee
where $\vec{G}_{\vec{B}} = \vec{B}^{\transp}\vec{B}$. The subscript in $\vec{G}_{\vec{B}}$ will be left out when no confusion can arise. 

Let us now instead turn to the second formulation and formulate \eqref{prob2} as an optimization over the Minkowski polytope $\mathcal{M}_1$. We keep the factorization $\vec{F} = \vec{S}^{-1}\vec{UB}$, but we further factorize $\vec{B}$ as $\vec{B} = \vec{LZ}$, where $\vec{L}$ is a Minkowski reduced basis of the lattice $\Lambda_{\vec{B}}$ and $\vec{Z}$ a unimodular matrix. This gives that the $\vec{F}$ in \eqref{prob2} can also be factorized as $\vec{F} = \vec{S}^{-1}\vec{ULZ}$. Furthermore, the constraint $D^2_{\min}(\vec{B}) \geq 1$ is now equivalent to $\vec{G}_{\vec{L}} = \vec{L}^{\transp}\vec{L} \in \mathcal{M}_1$. Thus, \eqref{prob2} can as well be formulated as
\be
\label{minkproblatt}
\begin{gathered}
\min_{\vec{U},\vec{L},\vec{Z}} \mathrm{tr}(\vec{Z}^{\transp}\vec{L}^{\transp}\vec{U}^{\transp}\vec{S}^{-2}\vec{ULZ})\\
\textnormal{subject to}\\
\begin{aligned}
\vec{G}_{\vec{L}} \in \mathcal{M}_1.
\end{aligned}
\end{gathered}
\ee
Also for \eqref{minkproblatt}, the subscript in $\vec{G}_{\vec{L}}$ will sometimes be left out. Note that the objective functions in \eqref{ryshproblatt} and \eqref{minkproblatt} are exactly the same, since $\vec{B} = \vec{LZ}$, but the optimization procedure is different for the two problems. First of all, the optimization domains are different. Secondly,
the optimization in \eqref{ryshproblatt} only involves a minimization over orthogonal ($\vec{U}$) and invertible ($\vec{B}$) matrices, while \eqref{minkproblatt} is a minimization over orthogonal, invertible ($\vec{L}$) \emph{and} unimodular matrices ($\vec{Z}$). Hence, the methodology for solving \eqref{ryshproblatt} differs from the one solving \eqref{minkproblatt}. The formulation in \eqref{minkproblatt} also reveals the fact that changing the basis in $\Lambda_{\vec{B}}$, i.e., varying $\vec{Z}$, only affects the transmitted power, which is not evident from the formulation in \eqref{ryshproblatt}. Another advantage of the formulation in \eqref{minkproblatt} will be revealed by Theorem \ref{thm2} in Section \ref{minkredsec}. The work in \cite{Svante} considered a problem formulation similar to \eqref{minkproblatt}, but without using the Minkowski reduction domain. Instead, only the objective function was studied for different lattice bases $\vec{L}$. Both \cite{Forney} and \cite{Svante} made approximations to the minimum distance problem, the hypothesis was that the densest packing lattices in high dimensions should produce large distances. However, no exact results were presented. Instead, in \cite{Svante} it was just proposed that $\vec{L}$ should be a basis for the densest lattice packing, and a heuristic, iterative algorithm was given to find the optimal $\vec{Z}$. The derived precoders turn out to have good performance, however the question remains whether they indeed are optimal minimum distance precoders, and if not, how far away they are from the optimum. Thus, \eqref{minkproblatt} was not satisfactory treated in \cite{Svante}.

Two fundamental questions arise about the problems \eqref{ryshproblatt} and \eqref{minkproblatt}: 1) Is there an explicit formula for the optimal solutions to any of the problems? 2) If there is no such formula, what is the structure of the solution and can it be found in a simple way for any channel outcome $\vec{S}$? 

Before answering these questions, we look at some classical lattice problems and tools for solving lattice optimization problems. Our motivation for surveying these well known problems is that the problem studied in this paper is tightly connected, and has in principle the same structure in its solution as some of the classical problems.

\subsection{Classical Lattice Problems}
\label{classlatt}
There are many optimization problems that can be interpreted as optimization over lattices. A famous one is finding the \emph{densest lattice packing} of spheres in an $N$-dimensional space, corresponding to the following optimization
\be
\label{densprob}
\begin{gathered}
\min_{\vec{L}} \mathrm{Vol}(\vec{L}) \\
\textnormal{subject to}
\begin{aligned}
D^2_{\min}(\vec{L}) \geq 1,
\end{aligned}
\end{gathered}
\ee
i.e., to find, among all lattices with fixed minimum distance, the lattice with the minimal volume. 
The dual of this problem is to find the lattice maximizing the volume of the sphere \emph{encompassed} by its Voronoi region; this is known as maximizing the \emph{covering} of the lattice. Mathematically, it corresponds to the following optimization
\be
\label{coverprob}
\begin{gathered}
\min_{\vec{L}} \max_{\vec{w} \in \mathcal{V}(\vec{L})} \|\vec{w}\| \\
\textnormal{subject to} \\
\begin{aligned}
D^2_{\min}(\vec{L}) \geq 1.
\end{aligned}
\end{gathered}
\ee
The general solutions of these problems remain unknown as of today. However, for small enough dimensions, solutions are known. In two dimensions, it turns out that the hexagonal lattice solves both of these problems; this fact was shown for \eqref{densprob} by Lagrange in 1801 \cite{Lagrange}, and for \eqref{coverprob} by Kershner in 1934 \cite{CS88}. The packing problem has been solved for $N \leq 9$ and $N = 24$, while it is unsolved for all other $N$. For the covering problem, the solution is known for $N \leq 5$. Although the packing problem is unsolved in general, it is known that the optimal lattice must be a perfect lattice which also immediately implies that it is attained at a Minkowski extreme lattice. Hence, finding the densest lattice packing in any dimension $N$ amounts to traversing the non-isometric vertices in $\mathcal{R}_1$, or traversing the vertices in $\mathcal{M}_1$. Although the former is computationally more feasible, traversing the non-isometric perfect forms also becomes computationally inefficient for higher dimensions. Albeit the computational bottleneck, it is known that \eqref{densprob} and \eqref{coverprob} are both discrete optimization problems rather than continuous ones.

To show that the solution of \eqref{densprob} is achieved by a perfect lattice (or a Minkowski extreme lattice), it suffices to show that $\mathrm{Vol}(\vec{L})$ is a strictly concave function over $\mathcal{S}_{\succ 0}^{N\times N}$. Since $\mathcal{S}_{\succ 0}^{N\times N}$ contains the polytopes $\mathcal{R}_1$ and $\mathcal{M}_1$, this therefore implies that $\mathrm{Vol}(\vec{L})$ is concave over both $\mathcal{R}_1$ and $\mathcal{M}_1$. Therefore, the solution to \eqref{densprob} is attained at the vertices of these polytopes, i.e., at the perfect lattices (vertices of $\mathcal{R}_1$) and the Minkowski extreme lattices (vertices of $\mathcal{M}_1$). Hence, \emph{concavity} of the objective function is enough to conclude that perfect lattices solve a given lattice optimization problem. The concavity of $\det(\vec{G})^{1/N}$ over $\mathcal{S}_{\succ 0}^{N\times N}$ was shown by Minkowski \cite{Min05}. 

Next, we show that the objective functions in \eqref{ryshproblatt} and \eqref{densprob} are of different nature.
The orthogonal matrix $\vec{U}$ minimizing the objective function in \eqref{ryshproblatt} has been found in \cite{Svante}, and is shown to be equal to the left orthogonal matrix in the SVD decomposition of $\vec{B}$. Hence, inserting this $\vec{U}$ into the objective function gives the optimization
\be
\label{ourproblatt2}
\begin{gathered}
\min_{\vec{G}} \sum_{j=1}^N \omega_j(\vec{G}_{\vec{B}})/s_j^2 \\
\textnormal{subject to}
\begin{aligned}
\vec{G}_{\vec{B}} \in \mathcal{R}_1,
\end{aligned}
\end{gathered}
\ee
where $\omega_j(\vec{G}_{\vec{B}})$ is the $j$:th largest eigenvalue of $\vec{G}_{\vec{B}}$ and $s_j$ is the $j$:th largest diagonal element in $\vec{S}$. The optimization in \eqref{densprob} can be performed over the Ryshkov polytope, with the objective function $\sqrt[N]{\det(\vec{B}^{\transp}\vec{S}^{-2}\vec{B})} = \sqrt[N]{\det(\vec{S}^{-2})\det(\vec{G}_{\vec{B}})}$. The matrix $\vec{S}$ can be regarded as a constant and does not impact the optimization. It further holds that $$\sqrt[N]{\det(\vec{S}^{-2})\det(\vec{G}_{\vec{B}})} = \sqrt[N]{\prod_{j=1}^N \omega_j(\vec{G}_{\vec{B}})/s_j^2}.$$ Hence, \eqref{densprob} minimizes the $N$:th root of the product of the eigenvalues of $\vec{G}_{\vec{B}}$ over $\mathcal{R}_1$, while \eqref{ourproblatt2} minimizes a weighted sum of them. Due to the arithmetic-geometric mean (AM-GM) inequality, we have that $$\sum_{j=1}^N \omega_j(\vec{G})/s_j^2 \geq N\sqrt[N]{\prod_{j=1}^N\omega_j(\vec{G})/s_j^2},$$ which shows that the $\vec{L}$ solving \eqref{densprob} is only minimizing the lower bound to the objective function in \eqref{ourproblatt2}, thus not guaranteeing that it is the optimum to \eqref{ourproblatt2}\footnote{This same reasoning is used in \cite{Svante} in order to propose densest lattices as good candidates for providing a large minimum distance.} Hence, although \eqref{ryshproblatt} and \eqref{densprob} have the same optimization domain, \eqref{ryshproblatt} posseses a different objective function than \eqref{densprob}, and is thus a different lattice optimization problem. 

\section{Optimal Lattice Structure}
\label{probsolsec}
This section will prove the concavity of the objective functions in \eqref{ryshproblatt} and \eqref{minkproblatt}, respectively. 
We start by proving the concavity of the objective function in \eqref{minkproblatt} over $\mathcal{S}_{\succ 0}^{N\times N}$, for any given $\vec{S}$ and $\vec{Z}$ matrix. Define $f(\vec{L},\vec{Z}) \stackrel{\triangle}{=} \min_{\vec{U}} \mathrm{tr}(\vec{Z}^{\transp}\vec{L}^{\transp}\vec{U}^{\transp}\vec{S}^{-2}\vec{ULZ})$, which is the objective function in \eqref{minkproblatt} without the minimization over $\vec{L}$ and $\vec{Z}$. Observe also that $f(\vec{B},\vec{I}_N)$ is the objective function in \eqref{ryshproblatt} without the minimization over $\vec{B}$. We now show
\begin{theorem}
\label{thm2}
For a fixed $\vec{Z}$, $f(\vec{L},\vec{Z})$ is concave over $\mathcal{S}_{\succ 0}^{N\times N}$ with respect to $\vec{G}_{\vec{L}}$.
\end{theorem}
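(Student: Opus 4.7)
My plan is to reduce the claim to the concavity of a weighted sum of eigenvalues on the cone of symmetric matrices. First I would carry out the inner minimization over $\vec{U}$ explicitly. Using the cyclic property of the trace, the objective rewrites as $\mathrm{tr}(\vec{S}^{-2}\vec{U}\vec{G}_{\vec{L}\vec{Z}}\vec{U}^{\transp})$ with $\vec{G}_{\vec{L}\vec{Z}}=\vec{Z}^{\transp}\vec{G}_{\vec{L}}\vec{Z}$. The von Neumann (matrix rearrangement) trace inequality then states that, over orthogonal $\vec{U}$, the trace is minimized precisely when the eigenvalues of $\vec{S}^{-2}$ and those of $\vec{G}_{\vec{L}\vec{Z}}$ are paired in opposite order. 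Because $1/s_1^2\le 1/s_2^2\le\cdots\le 1/s_N^2$ while $\omega_j(\cdot)$ denotes the $j$-th largest eigenvalue, the minimum equals
\[
f(\vec{L},\vec{Z})=\sum_{j=1}^N \frac{\omega_j(\vec{Z}^{\transp}\vec{G}_{\vec{L}}\vec{Z})}{s_j^2},
\]
which coincides with the representation already invoked in \eqref{ourproblatt2}.

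Next I would observe that the map $\vec{G}_{\vec{L}}\mapsto \vec{Z}^{\transp}\vec{G}_{\vec{L}}\vec{Z}$ is linear and, because $\vec{Z}$ is unimodular hence invertible, carries $\mathcal{S}_{\succ 0}^{N\times N}$ into itself. Concavity of $f(\cdot,\vec{Z})$ in $\vec{G}_{\vec{L}}$ therefore reduces to concavity of
\[
h(\vec{M})=\sum_{j=1}^N c_j\,\omega_j(\vec{M}),\qquad c_j \triangleq 1/s_j^2,
\]
on $\mathcal{S}^{N\times N}$, where the essential fact is that $c_1\le c_2\le\cdots\le c_N$.

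For this step I would use Abel summation with the convention $c_0=0$, which yields
\[
h(\vec{M})=\sum_{k=1}^N (c_k-c_{k-1})\,S_k(\vec{M}),\qquad S_k(\vec{M})\triangleq \sum_{j=k}^N \omega_j(\vec{M}).
\]
Every increment $c_k-c_{k-1}$ is nonnegative by monotonicity of the $c_j$, and each $S_k$ is the sum of the $N-k+1$ smallest eigenvalues of $\vec{M}$. By Ky Fan's minimum principle, $S_k(\vec{M})=\min_{\vec{U}^{\transp}\vec{U}=\vec{I}_{N-k+1}}\mathrm{tr}(\vec{U}^{\transp}\vec{M}\vec{U})$, which is a pointwise minimum of affine functions of $\vec{M}$ and therefore concave on $\mathcal{S}^{N\times N}$. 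A nonnegative combination of concave functions is concave, which gives the theorem.

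The principal obstacle is conceptual rather than computational: the sign of the increments $c_k-c_{k-1}$ is what makes the argument work. Had the weights $1/s_j^2$ been aligned with the decreasing order of the $\omega_j$, Abel summation would instead produce a nonnegative combination of sums of the \emph{largest} eigenvalues, each of which is Ky-Fan convex, and the conclusion would fail. The concavity thus rests squarely on the anti-aligned ordering that is forced by choosing the minimizing (rather than maximizing) $\vec{U}$; care in tracking this ordering is the single delicate point in the proof.
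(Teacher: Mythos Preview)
Your argument is correct, with one small wording slip: the cyclic property of the trace alone yields $\mathrm{tr}(\vec{S}^{-2}\vec{U}(\vec{LZ})(\vec{LZ})^{\transp}\vec{U}^{\transp})$, not $\mathrm{tr}(\vec{S}^{-2}\vec{U}(\vec{LZ})^{\transp}(\vec{LZ})\vec{U}^{\transp})$. The two expressions coincide only \emph{after} the minimization over $\vec{U}$, because $(\vec{LZ})(\vec{LZ})^{\transp}$ and $(\vec{LZ})^{\transp}(\vec{LZ})=\vec{Z}^{\transp}\vec{G}_{\vec{L}}\vec{Z}$ share the same spectrum; this is precisely the similarity step the paper writes out via an eigendecomposition. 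So the formula you land on is right, but the justification should invoke this eigenvalue fact rather than bare cyclicity.

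On the main point, your route and the paper's diverge once both reach $h(\vec{G}_{\vec{L}},\vec{Z})=\min_{\vec{U}}\mathrm{tr}(\vec{S}^{-2}\vec{U}\vec{Z}^{\transp}\vec{G}_{\vec{L}}\vec{Z}\vec{U}^{\transp})$. The paper simply stops here: for each fixed $\vec{U}$ the trace is linear in $\vec{G}_{\vec{L}}$, so $h$ is a pointwise infimum of affine functions and hence concave---a two-line argument. You instead evaluate the infimum explicitly via the von Neumann trace inequality to obtain $\sum_j \omega_j/s_j^2$, and then use Abel summation to rewrite this as a nonnegative combination of the tail-sums $S_k$, each concave by Ky Fan's minimum principle. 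This is correct and buys something the paper's proof does not make explicit: your ordering analysis pinpoints exactly why the conclusion would fail had one maximized rather than minimized over $\vec{U}$. But it is strictly longer, since Ky Fan's principle is itself ``infimum of affine is concave'' applied to each $S_k$; you are re-deriving at the eigenvalue level what the paper obtains in one stroke at the matrix level.
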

\begin{proof}
Write $\min_{\vec{U}} \mathrm{tr}(\vec{Z}^{\transp}\vec{L}^{\transp}\vec{U}^{\transp}\vec{S}^{-2}\vec{ULZ}) = \min_{\vec{U}} \mathrm{tr}(\vec{S}^{-2}\vec{ULZ}\vec{Z}^{\transp}\vec{L}^{\transp}\vec{U}^{\transp})$. Let $\vec{LZ}\vec{Z}^{\transp}\vec{L}^{\transp} = \vec{QD}\vec{Q}^{\transp}$ be the eigenvalue decomposition of $\vec{LZ}\vec{Z}^{\transp}\vec{L}^{\transp}$, where $\vec{Q}$ is the orthogonal matrix. Now note that 
\begin{eqnarray*}
f(\vec{L},\vec{Z}) & = & \min_{\vec{U}} \mathrm{tr}(\vec{S}^{-2}\vec{ULZ}\vec{Z}^{\transp}\vec{L}^{\transp}\vec{U}^{\transp}) \\
& = & \min_{\vec{U}} \mathrm{tr}(\vec{S}^{-2}\vec{UQD^2}\vec{Q}^{\transp}\vec{U}^{\transp}) \\
& = & \min_{\vec{U}} \mathrm{tr}(\vec{S}^{-2}\vec{U}\vec{Q}^{\transp}\vec{D}^2\vec{Q}\vec{U}^{\transp}) \\
& = & \min_{\vec{U}} \mathrm{tr}(\vec{S}^{-2}\vec{U}\vec{Z}^{\transp}\vec{L}^{\transp}\vec{LZ}\vec{U}^{\transp}) \\
& = & \min_{\vec{U}} \mathrm{tr}(\vec{S}^{-2}\vec{U}\vec{Z}^{\transp}\vec{G}_{\vec{L}}\vec{Z}\vec{U}^{\transp}).  
\end{eqnarray*}
Hence 
\begin{eqnarray*}
f(\vec{L},\vec{Z}) & = & h(\vec{G}_{\vec{L}},\vec{Z}) \\
& = & \min_{\vec{U}} \mathrm{tr}(\vec{S}^{-2}\vec{U}\vec{Z}^{\transp}\vec{G}_{\vec{L}}\vec{Z}\vec{U}^{\transp}).
\end{eqnarray*}
Now it follows that for positive semidefinite $\vec{G}_1$, $\vec{G}_2$ and $0 \leq \gamma \leq 1$, 
\begin{eqnarray*}
h(\gamma \vec{G}_1 + (1-\gamma)\vec{G}_2,\vec{Z}) & = & \min_{\vec{U}}\gamma\mathrm{tr}(\vec{S}^{-2}\vec{U}\vec{G}_1\vec{U}^{\transp}) + (1-\gamma)\mathrm{tr}(\vec{S}^{-2}\vec{U}\vec{G}_2\vec{U}^{\transp}) \nonumber \\
& \geq & \gamma\min_{\vec{U}}\mathrm{tr}(\vec{S}^{-2}\vec{U}\vec{G}_1\vec{U}^{\transp}) + (1-\gamma)\min_{\vec{U}}\mathrm{tr}(\vec{S}^{-2}\vec{U}\vec{G}_2\vec{U}^{\transp}) \nonumber \\
& = & \gamma h(\vec{G}_1) + (1-\gamma)h(\vec{G}_2), 
\end{eqnarray*}
which shows that $h(\vec{G}_{\vec{L}},\vec{Z})$, and thus also $f(\vec{L},\vec{Z})$, are concave over $\mathcal{S}_{\succ 0}^{N\times N}$ with respect to $\vec{G}_{\vec{L}}$. 
\end{proof}
An immediate corollary is
\begin{corollary}
\label{corr1}
$f(\vec{L},\vec{Z})$ is concave over $\mathcal{S}_{\succ 0}^{N\times N}$ with respect to the Gram matrix $\vec{G}_{\vec{B}} = \vec{G}_{\vec{LZ}} = \vec{Z}^{\transp}\vec{L}^{\transp}\vec{LZ}$. 
\end{corollary}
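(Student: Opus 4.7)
The plan is to observe that the expression derived in Theorem \ref{thm2} can be rewritten purely in terms of $\vec{G}_{\vec{B}}$, at which point the same concavity argument applies directly. Specifically, from the last line in the chain of equalities in the proof of Theorem \ref{thm2} we have
\[
f(\vec{L},\vec{Z}) \;=\; \min_{\vec{U}} \mathrm{tr}\bigl(\vec{S}^{-2}\vec{U}\vec{Z}^{\transp}\vec{G}_{\vec{L}}\vec{Z}\vec{U}^{\transp}\bigr) \;=\; \min_{\vec{U}} \mathrm{tr}\bigl(\vec{S}^{-2}\vec{U}\vec{G}_{\vec{B}}\vec{U}^{\transp}\bigr),
\]
where the second equality uses $\vec{G}_{\vec{B}} = \vec{Z}^{\transp}\vec{L}^{\transp}\vec{LZ} = \vec{Z}^{\transp}\vec{G}_{\vec{L}}\vec{Z}$. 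So the function of interest is nothing but the mapping $\vec{G}_{\vec{B}} \mapsto \min_{\vec{U}}\mathrm{tr}(\vec{S}^{-2}\vec{U}\vec{G}_{\vec{B}}\vec{U}^{\transp})$, which has exactly the form treated in Theorem \ref{thm2} with the unimodular factor specialized to $\vec{Z} = \vec{I}_N$.

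From here I would simply invoke Theorem \ref{thm2} directly with this specialization, since the concavity proof there does not depend on the specific structure of $\vec{Z}$ beyond letting us drop it into the Gram matrix. Alternatively, one can repeat the one-line argument: for $\vec{G}_1,\vec{G}_2 \in \mathcal{S}_{\succ 0}^{N\times N}$ and $\gamma \in [0,1]$,
\begin{eqnarray*}
\min_{\vec{U}} \mathrm{tr}\bigl(\vec{S}^{-2}\vec{U}[\gamma\vec{G}_1 + (1-\gamma)\vec{G}_2]\vec{U}^{\transp}\bigr) & \geq & \gamma \min_{\vec{U}} \mathrm{tr}(\vec{S}^{-2}\vec{U}\vec{G}_1\vec{U}^{\transp}) \\
& & +\,(1-\gamma)\min_{\vec{U}} \mathrm{tr}(\vec{S}^{-2}\vec{U}\vec{G}_2\vec{U}^{\transp}),
\end{eqnarray*}
using that the pointwise minimum of a sum is at least the sum of the pointwise minima. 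This yields concavity in $\vec{G}_{\vec{B}}$.

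I do not expect any substantive obstacle here; the corollary is essentially a cosmetic rewriting of Theorem \ref{thm2}. The only minor point worth being explicit about is that the domain $\mathcal{S}_{\succ 0}^{N\times N}$ is preserved under the congruence $\vec{G}_{\vec{L}} \mapsto \vec{Z}^{\transp}\vec{G}_{\vec{L}}\vec{Z}$ (since $\vec{Z}$ is unimodular, hence invertible), so stating the concavity in the variable $\vec{G}_{\vec{B}}$ is well-posed over the same cone. No additional calculation is required.
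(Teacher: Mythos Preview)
Your proposal is correct and matches the paper's own proof, which simply sets $\vec{L}=\vec{B}$, $\vec{Z}=\vec{I}_N$ and invokes Theorem~\ref{thm2}. Your additional remarks (the explicit rewriting in terms of $\vec{G}_{\vec{B}}$ and the domain-preservation observation) are sound but unnecessary for the argument.
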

\begin{proof}
Let $\vec{L} = \vec{B}$, $\vec{Z} = \vec{I}_N$ and apply Theorem \ref{thm2}.
\end{proof}
Taken together, Theorem \ref{thm2} and Corollary 1 show that the objective functions in \eqref{ryshproblatt} and \eqref{minkproblatt}, respectively, are both concave over their corresponding domains. This immediately implies that the solution to \eqref{ryshproblatt} is a perfect lattice, and the solution to \eqref{minkproblatt} is a Minkowski extreme lattice. Exactly which perfect lattice/Minkowski extreme lattice that solves \eqref{prob2} depends of course on the channel outcome $\vec{S}$, and an algorithm is given in Section \ref{optZsec} that enumerates all possible perfect forms solving \eqref{prob2} for a specific $\vec{S}$. However, since there are finitely many perfect lattices/Minkowski extreme lattices in $N$ dimensions, we know that there are finitely many different lattices solving the problem for all $\vec{S}$. This answers our second question posed in Section \ref{lattprobsec}: The optimal $\vec{G}_{\vec{B}}$ in \eqref{ryshproblatt} is a vertex of the polytope $\mathcal{R}_1$, and the optimal $\vec{G}_{\vec{L}}$ solving \eqref{minkproblatt} corresponds to a vertex in $\mathcal{M}_1$. Hence, the solution to \eqref{prob2} does not depend continuously on $\vec{S}$, instead it changes in a discrete fashion when $\vec{S}$ is varied continuously. Relating to our first question in Section \ref{lattprobsec}, this result implies that an explicit formula for the solution of \eqref{prob2} seems out of reach, since such a formula does not exist for \eqref{densprob} whose set of possible solutions is a subset of the set of possible solutions to \eqref{prob2}. Altogether, a previously unknowned result is revealed: There are finitely many lattices that can solve the minimum distance optimization problem in \eqref{prob2}, and they can be enumerated offline.

Theorem \ref{thm2} also reveals that for \emph{any} given $\vec{Z}$ matrix, the optimal solution to \eqref{minkproblatt} occurs at a Minkowski extreme lattice. Thus, given any $\vec{Z}$, the optimal $\vec{L}$ that builds up $\vec{F} = \vec{S}^{-1}\vec{ULZ}$ in \eqref{prob2} is a Minkowski extreme lattice. This fact will be used in Section \ref{optZsec} to develop a good suboptimal precoder construction. Hence, the problem formulation in \eqref{minkproblatt} provides additional information about the behavior of \eqref{prob2}, not present in \eqref{ryshproblatt}: This is the main reason for introducing \eqref{minkproblatt}.

We can already at this stage deduce several interesting conclusions from the result in Theorem \ref{thm2}. For up to three dimensions, there is only one non-isometric perfect lattice in each dimension: For $N = 2$ it is the hexagonal lattice and in $N = 3$ it is the face-centered cubic lattice. Since these are the only non-isometric perfect lattices in these dimensions, they also solve \eqref{densprob}, and thus the proposition in \cite{Svante} to use densest lattice packings in \eqref{minkproblatt} is optimal for these dimensions. However, when $N = 4$ there are two non-isometric lattices: The checkerboard lattice $D_4$ and the root lattice $A_4$ \cite{CS88}. It will be demonstrated in Section \ref{denscontrsec} that \emph{both} of these lattices occur as solutions to \eqref{ryshproblatt} for different $\vec{S}$, so the constructions in \cite{Svante} are suboptimal for $N = 4$. 
Another interesting consequence of Theorem \ref{thm2} is that the main result in \cite{KCMR11}, which shows that the hexagonal lattice is optimal in two dimensions, now follows immediately from Theorem \ref{thm2}.

Note that it is now an easy task to construct the optimal $\vec{F}$ in \eqref{mod3}, once the $\vec{G}_{\vec{B}}$ solving \eqref{ryshproblatt} is known. Let $\vec{G}_{\mathrm{opt}} = \vec{B}_{\mathrm{opt}}^{\transp}\vec{B}_{\mathrm{opt}}$ denote the optimal form and $\vec{G}_{\mathrm{opt}} = \vec{Q}_{\mathrm{opt}}\vec{D}_{\mathrm{opt}}\vec{Q}_{\mathrm{opt}}^{\transp}$ be its eigenvalue decomposition. Since the optimal $\vec{U}$ in $\vec{F} = \vec{S}^{-1}\vec{UB}$ is equal to the left orthogonal matrix in the SVD decomposition of $\vec{B}$, it follows that the optimal $\vec{F}$ can be constructed as 
\be
\label{optFconstr}
\vec{F}_{\mathrm{opt}} = \vec{S}^{-1}\sqrt{\vec{D}_{\mathrm{opt}}}\vec{U}_{\mathrm{opt}}^{\transp}.
\ee

To summarize, we now have the following knowledge at hand about the solution to our original problem in \eqref{prob2}. We have shown that \eqref{prob2} is equivalent to both \eqref{ryshproblatt} and \eqref{minkproblatt}. Theorem \ref{thm2} then shows that the $\vec{L}$ matrix solving \eqref{minkproblatt}, for \emph{any} invertible $\vec{S}$, gives rise to a Gram matrix $\vec{G} = \vec{L}^{\transp}\vec{L}$ that corresponds to a vertex in the polytope $\mathcal{M}_1$. Since $\mathcal{M}_1$ has a finite number of vertices for any dimension $N$, and is independent of the matrix $\vec{S}$, it holds that there are finitely many $\vec{L}$ matrices (up to rotation) that are candidates to solving \eqref{minkproblatt} for any given $\vec{S}$. Once the optimal $\vec{L}_{\mathrm{opt}}$ is known (up to rotation), it remains to find the optimal unimodular matrix $\vec{Z}_{\mathrm{opt}}$ in \eqref{minkproblatt} and then to construct the optimal precoder $\vec{F}_{\mathrm{opt}}$ from \eqref{optFconstr}, where $\vec{B}_{\mathrm{opt}} = \vec{L}_{\mathrm{opt}}\vec{Z}_{\mathrm{opt}}$. To find the optimal $\vec{Z}$, we need to perform a search over unimodular matrices, which can be simplified if good bounds on the optimum solution to \eqref{minkproblatt} are known. These bounds will be developed in the next section.

When it comes to the equivalent problem formulation in \eqref{ryshproblatt}, Corollary \ref{corr1} shows that the $\vec{B}$ matrix solving \eqref{ryshproblatt} for any given invertible $\vec{S}$, is such that it produces a Gram matrix $\vec{G} = \vec{B}^{\transp}\vec{B}$ that is one of the vertices in the polytope $\mathcal{R}_1$. Given the optimal $\vec{G}$ in $\mathcal{R}_1$, the optimal precoder is obtained through \eqref{optFconstr}. The $\mathcal{R}_1$ polytope contains infinitely many vertices, and it is known that each vertex is isometric to some vertex in $\mathcal{M}_1$. Since \eqref{ryshproblatt} is connected to \eqref{minkproblatt} through the factorization $\vec{B} = \vec{LZ}$, it holds that if $\vec{G}_{\vec{B}} = \vec{B}^{\transp}\vec{B} = \vec{Z}^{\transp}\vec{L}^{\transp}\vec{LZ}$ is a vertex in $\mathcal{R}_1$, then $\vec{G}_{\vec{L}} = \vec{L}^{\transp}\vec{L}$ is a vertex in $\mathcal{M}_1$. Hence, traversing the different vertices in $\mathcal{R}_1$ is equivalent to a joint enumeration of some of the vertices in $\mathcal{M}_1$ and different unimodular matrices $\vec{Z}$. However, it is instead possible to directly enumerate perfect forms by formulating an algorithm working over $\mathcal{R}_1$. Again, bounds are needed in order to restrict the amount of vertices to traverse, and they will be presented in the next section.


\section{Bounds On the Optimal Solution}
\label{boundssec}
We start by deriving lower and upper bounds to $\mathrm{tr}(\vec{Z}^{\transp}\vec{L}^{\transp}\vec{U}^{\transp}\vec{S}^{-2}\vec{ULZ})$, which is the objective function in \eqref{minkproblatt}. The upper bound presented here improves significantly upon the upper bound presented in \cite{Svante}. From these bounds, we are able to derive further bounds that aid in restricting the search space for the algorithms to be introduced in Section \ref{optZsec}.

\begin{theorem}
\label{lbthm}
The following lower bound holds for the optimal solution to \eqref{minkproblatt}
\be
\label{lb}
\min_{\vec{U},\vec{L},\vec{Z}}\mathrm{tr}(\vec{Z}^{\transp}\vec{L}^{\transp}\vec{U}^{\transp}\vec{S}^{-2}\vec{UL}\vec{Z}) \geq N\sqrt[N/2]{\det(\vec{L})/\det(\vec{S})}. 
\ee
\end{theorem}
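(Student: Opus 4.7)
The plan is to obtain the bound from a single application of the arithmetic--geometric mean (AM--GM) inequality to the squared singular values of the effective precoder. First I would introduce the shorthand $\vec{A} \stackrel{\triangle}{=} \vec{S}^{-1}\vec{ULZ}$, so that the objective in \eqref{minkproblatt} can be rewritten as
\[
\mathrm{tr}(\vec{Z}^{\transp}\vec{L}^{\transp}\vec{U}^{\transp}\vec{S}^{-2}\vec{ULZ}) \;=\; \mathrm{tr}(\vec{A}^{\transp}\vec{A}) \;=\; \sum_{j=1}^{N}\sigma_j^2(\vec{A}),
\]
where $\sigma_j(\vec{A})$ denotes the $j$-th singular value of $\vec{A}$. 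All subsequent estimates operate on these singular values.

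Next I would apply AM--GM to the $N$ non-negative quantities $\sigma_j^2(\vec{A})$, which yields
\[
\sum_{j=1}^{N}\sigma_j^2(\vec{A}) \;\geq\; N\left(\prod_{j=1}^{N}\sigma_j^2(\vec{A})\right)^{1/N} \;=\; N\,|\det(\vec{A})|^{2/N}.
\]
The determinant on the right is straightforward: because $\vec{U}$ is orthogonal, $|\det(\vec{U})|=1$; because $\vec{Z}$ is unimodular, $|\det(\vec{Z})|=1$; hence $|\det(\vec{A})| = \det(\vec{L})/\det(\vec{S})$, using $\det(\vec{S})>0$ from its positive diagonal and the standard basis-orientation convention $\det(\vec{L})>0$. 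Plugging this back in gives exactly the quantity $N\sqrt[N/2]{\det(\vec{L})/\det(\vec{S})}$ claimed in \eqref{lb}.

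Because the inequality holds pointwise in $\vec{U}$, $\vec{L}$, $\vec{Z}$, it survives the minimization over $\vec{U}$ and $\vec{Z}$, and the theorem is in force as stated. The apparent residual dependence on $\vec{L}$ on the right-hand side is innocuous: $\det(\vec{L})=\mathrm{Vol}(\Lambda_{\vec{B}})$ is an invariant of the received lattice (it is unchanged by multiplication with a unimodular $\vec{Z}$), so \eqref{lb} is really a bound parametrised by the received-lattice volume. I do not expect any serious obstacle here --- the entire proof is one AM--GM step combined with multiplicativity of the determinant. The only point worth remarking is that equality demands all $\sigma_j(\vec{A})$ to coincide, i.e.\ $\vec{A}$ proportional to an orthogonal matrix, which generically cannot be realised under the constraint $\vec{G}_{\vec{L}}\in\mathcal{M}_1$; hence the bound will typically be strict, and the genuinely delicate contribution of this section is rather the matching upper bound promised just before the theorem statement.
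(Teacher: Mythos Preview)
Your argument is correct, but it differs from the paper's. The paper proves the bound by relaxing the unimodularity constraint on $\vec{Z}$ to the set of real matrices with $\det(\vec{Z})=\pm 1$ and then applying Lagrange multipliers: setting $\vec{M}=\vec{L}^{\transp}\vec{U}^{\transp}\vec{S}^{-2}\vec{UL}$, the stationarity condition $2\vec{Z}_o^{\transp}\vec{M}=\gamma(\vec{Z}_o^{\transp})^{-1}$ yields $\vec{Z}_o^{\transp}\vec{M}\vec{Z}_o=\sqrt[N]{\det(\vec{M})}\,\vec{I}_N$, whence the trace equals $N\sqrt[N]{\det(\vec{M})}=N(\det(\vec{L})/\det(\vec{S}))^{2/N}$. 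Your route, by contrast, is a single application of AM--GM to the eigenvalues of $\vec{A}^{\transp}\vec{A}$ followed by multiplicativity of the determinant.

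The two approaches trade off as follows. Your proof is shorter and logically tighter: it gives an inequality valid for every feasible $(\vec{U},\vec{L},\vec{Z})$ without any relaxation, and it sidesteps the (unstated) verification that the Lagrange critical point is in fact a global minimiser over the relaxed set. The paper's approach, on the other hand, carries an extra piece of information that yours does not make explicit: the lower bound is \emph{attained} by a real-valued matrix $\vec{Z}_o$ with $|\det(\vec{Z}_o)|=1$, so the gap between the bound and the true optimum is entirely due to the integrality constraint on $\vec{Z}$. The paper itself notes that an AM--GM derivation (together with Hadamard's inequality) already appeared in \cite{Svante}; your version is a streamlined variant of that earlier argument, since applying AM--GM directly to the singular values of $\vec{S}^{-1}\vec{ULZ}$ makes the Hadamard step unnecessary.
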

\begin{proof}
Dropping the integer-valued constraint on $\vec{Z}$, while keeping the determinant constraint $\det(\vec{Z}) = \pm 1$, we apply the method of Lagrange multipliers to find first order optimality conditions. Let $\vec{M} = \vec{L}^{\transp}\vec{U}^{\transp}\vec{S}^{-2}\vec{UL}$. The optimal $\vec{Z}_o$ must satisfy
\be
\label{firstordopt}
\frac{\partial \mathrm{tr}(\vec{Z}_o^{\transp}\vec{L}^{\transp}\vec{U}^{\transp}\vec{S}^{-2}\vec{UL}\vec{Z}_o)}{\partial \vec{Z}_o} = \gamma\frac{\det(\vec{Z}_o)-1}{\partial \vec{Z}_o} \Rightarrow 2\vec{Z}_o^{\transp}\vec{M} = \gamma\left(\vec{Z}_o^{\transp}\right)^{-1},
\ee
where $\gamma \in \mathbb{R}$. Taking determinants on both sides, and making use of $\det(\vec{Z}_o) = \pm 1$, we get $\gamma = 2\sqrt[N]{\det(\vec{M})}$. Inserting this $\gamma$ into \eqref{firstordopt} and multiplying both sides of the equation with $\vec{Z}_o^{\transp}$, we arrive at $\vec{Z}_o^{\transp}\vec{M}\vec{Z}_o = \sqrt[N]{\det(\vec{M})}\vec{I}_N$. Hence, for this $\vec{Z}_o$, we get 
\begin{eqnarray*}
\mathrm{tr}(\vec{Z}_o^{\transp}\vec{L}^{\transp}\vec{U}^{\transp}\vec{S}^{-2}\vec{UL}\vec{Z}_o) & = & \mathrm{tr}(\vec{I}_N)\sqrt[N]{\det(\vec{M})} \\  
& = & N\sqrt[N]{\det(\vec{M})}.
\end{eqnarray*} 
Since this expression is independent of $\vec{U}$, it is a lower bound to the objective function for a fixed $\vec{L}$. Expressing $\det(\vec{M}) = \det(\vec{L}^{\transp}\vec{S}^{-2}\vec{L}) = \det^2(\vec{L})/\det^2(\vec{S})$, we arrive at the lower bound in \eqref{lb}.
\end{proof}
This bound was also reported in \cite{Svante}, but derived in a different way, by using the AM-GM inequality and Hadamard's inequality. The approach presented here shows that this lower bound corresponds to the optimal \emph{real-valued} unimodular $\vec{Z}$. 

Next, we derive an upper bound on $\min_{\vec{U},\vec{L},\vec{Z}}\mathrm{tr}(\vec{Z}^{\transp}\vec{L}^{\transp}\vec{U}^{\transp}\vec{S}^{-2}\vec{UL}\vec{Z})$

\begin{theorem}
\label{ubthm}
If $D^2_{\min}(\vec{L}) = 1$, then
\be
\label{ub}
\min_{\vec{U},\vec{L},\vec{Z}}\mathrm{tr}(\vec{Z}^{\transp}\vec{L}^{\transp}\vec{U}^{\transp}\vec{S}^{-2}\vec{UL}\vec{Z}) \leq N\sqrt[N/2]{1/\det(\vec{S})}.
\ee
\end{theorem}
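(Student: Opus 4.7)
To prove the upper bound, I would exhibit a specific feasible triple $(\vec{U},\vec{L},\vec{Z})$ whose objective value is at most $N/\det(\vec{S})^{2/N}$.

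\textbf{Step 1: reduction to a single matrix.} Let $\vec{B}=\vec{U}\vec{L}\vec{Z}$. Since $\vec{U}$ is orthogonal and $\vec{Z}$ is integer unimodular, $D^2_{\min}(\vec{B})=D^2_{\min}(\vec{L})$, and by cyclic invariance of the trace,
\[
\mathrm{tr}(\vec{Z}^{\transp}\vec{L}^{\transp}\vec{U}^{\transp}\vec{S}^{-2}\vec{U}\vec{L}\vec{Z})=\mathrm{tr}(\vec{S}^{-2}\vec{B}\vec{B}^{\transp}).
\]
Thus it suffices to exhibit an invertible $\vec{B}\in\mathbb{R}^{N\times N}$ with $D^2_{\min}(\vec{B})\geq 1$ and $\mathrm{tr}(\vec{S}^{-2}\vec{B}\vec{B}^{\transp})\leq N/\det(\vec{S})^{2/N}$; a valid $(\vec{U},\vec{L},\vec{Z})$ is then recovered from $\vec{B}$ by a Minkowski reduction of the lattice $\Lambda_{\vec{B}}$.

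\textbf{Step 2: the matched ansatz.} Motivated by the AM--GM-tight case of the lower bound in Theorem~\ref{lbthm}, I propose $\vec{B}=\mathrm{diag}(s_j/\det(\vec{S})^{1/N})\vec{V}^{\transp}$ for some orthogonal $\vec{V}$ to be chosen below. Then $|\det\vec{B}|=1$ and $\vec{B}\vec{B}^{\transp}=\mathrm{diag}(s_j^2/\det(\vec{S})^{2/N})$, so
\[
\mathrm{tr}(\vec{S}^{-2}\vec{B}\vec{B}^{\transp})=\sum_{j=1}^N\frac{s_j^2/\det(\vec{S})^{2/N}}{s_j^2}=\frac{N}{\det(\vec{S})^{2/N}},
\]
meeting the claimed bound exactly. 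What remains is to select $\vec{V}$ so that the lattice constraint holds, namely $\sum_j s_j^2(\vec{V}^{\transp}\vec{e})_j^2\geq\det(\vec{S})^{2/N}$ for every $\vec{e}\in\mathbb{Z}^N\setminus\{\vec{0}_N\}$.

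\textbf{Step 3: the geometric obstruction.} This condition says that the open ellipsoid $\mathcal{E}=\{\vec{x}\in\mathbb{R}^N:\sum_j s_j^2 x_j^2<\det(\vec{S})^{2/N}\}$, whose Euclidean volume equals $v_N$ (the volume of the unit ball in $\mathbb{R}^N$), must contain no nonzero integer point in its interior after rotation by $\vec{V}$. Since $v_N<2^N$ for all $N\geq 2$, Minkowski's first theorem does not preclude such an orientation, and existence of an admissible $\vec{V}$ for every positive diagonal $\vec{S}$ can be argued via a Haar-averaging over the orthogonal group, bounding the expected number of nonzero integer points in $\vec{V}\mathcal{E}$, or alternatively by a direct construction using simultaneous Diophantine approximation to the ratios $s_i/s_j$.

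The main obstacle is precisely the existence claim in Step~3: producing an integer-avoiding orientation of $\mathcal{E}$ uniformly in $\vec{S}$ and $N$. The reduction in Step~1, the ansatz in Step~2, and the trace computation are all routine algebra, but the geometry-of-numbers step is where the technical depth of the proof lies; a purely constructive lattice-reduction argument would give the bound up to a dimension-dependent factor, and driving that factor down to $1$ to obtain the sharp value $N/\det(\vec{S})^{2/N}$ is the essential difficulty.
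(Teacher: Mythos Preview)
Your Steps 1 and 2 are correct and in fact land you on precisely the same feasible point the paper uses: your precoder $\vec{F}=\vec{S}^{-1}\vec{B}=\det(\vec{S})^{-1/N}\vec{V}^{\transp}$ is a scaled orthogonal matrix, exactly the shape of the paper's ``geometric mean'' precoder $\vec{F}_{\mathrm{gmd}}$. The entire content of the theorem therefore lies in your Step 3, and there your proposal has a genuine gap.

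The paper closes Step 3 constructively and in one line, via the equal-diagonal QR (geometric mean) decomposition: for any nonsingular $\vec{S}$ there exist orthogonal $\vec{Q},\vec{P}$ and upper-triangular $\vec{R}$ with every diagonal entry equal to $\det(\vec{S})^{1/N}$ such that $\vec{S}=\vec{Q}\vec{R}\vec{P}^{\transp}$. Taking $\vec{V}^{\transp}=\vec{P}$ gives $\vec{S}\vec{V}^{\transp}=\vec{Q}\vec{R}$, and for upper-triangular $\vec{R}$ one has the elementary bound $D^2_{\min}(\vec{R})\geq\min_i r_{ii}^2=\det(\vec{S})^{2/N}$ (look at the last nonzero coordinate of $\vec{R}\vec{e}$). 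This is exactly your lattice constraint $\sum_j s_j^2(\vec{V}^{\transp}\vec{e})_j^2\geq\det(\vec{S})^{2/N}$. No averaging, no Diophantine approximation, no dimension-dependent loss.

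By contrast, your Haar-averaging sketch does not obviously succeed. A back-of-the-envelope computation in $N=2$ for a highly eccentric ellipse (semi-axes $a$ and $1/a$) shows that the expected number of nonzero integer points in a random rotate of $\mathcal{E}$ is bounded away from zero (of order a constant, not $o(1)$ as $a\to\infty$), so the first-moment method as stated cannot produce an integer-avoiding rotation. The Diophantine alternative is not spelled out. Since you correctly flag Step 3 as ``the essential difficulty'' but leave it unresolved, the proof as written is incomplete; the missing ingredient is precisely the equal-diagonal QR construction the paper invokes.
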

\begin{proof}
Let $\vec{B} = \vec{ULZ} = \vec{QR}$ denote the QR-decomposition of the received lattice. It holds that $D^2_{\min}(\vec{B}) \geq \min_{1 \leq i \leq N} |r_{i,i}|^2$ \cite{Mow2006}. In \cite{Hager}, an orthogonal precoder matrix $\vec{F}_{\mathrm{gmd}}$ (geometric mean precoder) and an orthogonal receiver matrix $\vec{W}_{\mathrm{gmd}}$ were constructed, such that in the QR decomposition of $\vec{B}_{\mathrm{gmd}} = \vec{W}_{\mathrm{gmd}}\vec{S}\vec{F}_{\mathrm{gmd}}$, all diagonal elements of $\vec{R}$ equal $\sqrt[N]{\det(S)}$. Hence, in essence, the precoder $\vec{F}_{\mathrm{gmd}}$ together with the rotation $\vec{W}_{\mathrm{gmd}}$ at the receiver, produces a lattice with maximal lower bound on $D^2_{\min}$. This value is equal to the geometric mean of its singular values, thereby its name the ''geometric mean precoder''. It is clear that $D^2_{\min}(\vec{B}_{\mathrm{gmd}}) \geq \sqrt[N]{\det(S)}$, and since $\vec{F}_{\mathrm{gmd}}$ is orthogonal, $\mathrm{tr}(\vec{F}^{\transp}_{\mathrm{gmd}}\vec{F}_{\mathrm{gmd}}) = N$. Hence $d^2_{\min}(\vec{S},\vec{F}_{\mathrm{gmd}}) = D^2_{\min}(\vec{B}_{\mathrm{gmd}})/\mathrm{tr}(\vec{F}^{\transp}_{\mathrm{gmd}}\vec{F}_{\mathrm{gmd}}) \geq \sqrt[N/2]{\det(\vec{S})}/N$. Now it follows that for any precoder $\vec{F}$ with higher $d^2_{\min}(\vec{S},\vec{F})$ than $d^2_{\min}(\vec{S},\vec{F}_{\mathrm{gmd}})$, $d^2_{\min}(\vec{S},\vec{F}) \geq (\det(\vec{S}))^{2/N}/N$. Hence, this gives an upper bound on $\mathrm{tr}(\vec{F}\vec{F}^{\transp})$, $\mathrm{tr}(\vec{F}\vec{F}^{\transp}) \leq N\sqrt[N/2]{1/\det(\vec{S})}$. Writing $\vec{F} = \vec{S}^{-1}\vec{ULZ}$, we get the upper bound in \eqref{ub}.
\end{proof}

Combining Theorem \ref{lbthm} and \ref{ubthm}, we have the following bounds
\be
\label{enbounds}
N(\det(\vec{L})/\det(\vec{S}))^{2/N} \leq \mathrm{tr}(\vec{Z}^{\transp}\vec{L}^{\transp}\vec{U}^{\transp}\vec{S}^{-2}\vec{U}\vec{L}\vec{Z}) \leq N(1/\det(\vec{S}))^{2/N}.
\ee
\eqref{enbounds} translates into 
\be
\label{dminbounds}
(\det(\vec{S})/\det(\vec{L}))^{2/N}/N \geq d_{\min}^2(\vec{S},\vec{F}) \geq (\det(\vec{S}))^{2/N}/N.
\ee
Note that the $\vec{L}$ in \eqref{enbounds} and \eqref{dminbounds} is such that $D^2_{\min}(\vec{L}) = 1$.
Also, these bounds hold for \emph{any} precoder $\vec{F} = \vec{S}^{-1}\vec{ULZ}$ that improves upon $\vec{F}_{\mathrm{gmd}}$.
It is readily seen that the ratio between the upper bound and the lower bound in \eqref{enbounds} is $(1/\det(\vec{L}))^{2/N}$. This can be compared to the bounds in \cite{Svante}, where the ratio between the upper bound and lower bound contains the exponential factor $2^{N/2}$. Thus, the improvement in the upper bound is significant. Hence, for a fixed dimension, the optimum ratio $d^2_{\min}(\vec{S}\vec{F}_{\mathrm{opt}})/(\det(\vec{S}))^{2/N}$ is always smaller than $(1/\det(\vec{L}))^{2/N}$, \emph{independently} of the channel $\vec{S}$. For example, when $N = 2$, the optimal lattice is the hexagonal lattice $\vec{L}_{\mathrm{hex}}$ and when $D^2_{\min}(\vec{L}_{\mathrm{hex}}) = 1$, $\det(\vec{L}_{\mathrm{hex}}) = 1/2$. The ratio between the upper bound and lower bound in \eqref{dminbounds} is then 2, hence the optimal $d^2_{\min}(\vec{S}\vec{F}_{\mathrm{opt}})$ in two dimensions is at most twice the lower bound $d_{\min}^2(\vec{S}\vec{F}_{\mathrm{gmd}}) = \det(\vec{S})/2$. For $N = 3$, the optimal lattice is the face-centered cubic lattice $\vec{L}_{A_3}$ that has a volume of $\det(\vec{L}_{A_3}) = 1/2$ when $D^2_{\min}(\vec{L}_{A_3}) = 1$. In this case, the ratio between the bounds in \eqref{dminbounds} is $2^{2/3} \approx 1.59$; hence, the performance of the optimal precoder is at most $59\%$ better than for $\vec{F}_{\mathrm{gmd}}$. It is worthwile to observe that the ratio between the bounds, for optimal packing lattices $\vec{L}$, equals Hermite's constant $\delta_N \stackrel{\triangle}{=} \max_{\vec{L}} D^2_{\min}(\vec{L})/\mathrm{Vol}(\vec{L})^{2/N}$. Hermite's constant is the ratio between the constraint function and the objective function in \eqref{densprob}, and is therefore an optimization problem equivalent to \eqref{densprob}. The following upper and lower bounds are known for $\delta_N$ \cite{CS88,Milnor} $$ \frac{N}{2\pi e} + \frac{\log(\pi N)}{2\pi e} + c_{N,1} \leq \delta_N \leq \frac{1.744N}{2\pi e}(1 + c_{N,2}),$$
where $c_{N,1}$ and $c_{N,2}$ are constants depending on the dimension $N$. From this it follows that $\delta_N$ grows linearly with the dimension $N$, and thus the ratio of our bounds grows at most linearly with $N$. Note however that $\vec{F}_{\mathrm{gmd}}$ operates above the lower bound in \eqref{dminbounds}, and due to its good SER performance as reported in \cite{Hager}, it can serve as a basis for developing a suboptimal $\vec{Z}$ matrix to \eqref{minkproblatt}. This will be presented in Section \ref{optZsec}.

As discussed in the first paragraph of Section \ref{probsolsec}, a closed form solution to \eqref{prob2} seems out of reach. We are thus interested in an algorithm that can find the optimal $\vec{Z}$ in \eqref{minkproblatt}, or an algorithm to find the optimal $\vec{G}_{\vec{B}}$ in \eqref{ryshproblatt}. In order to do so, it is desirable to first have some bounds on the $\vec{Z}$ matrix or some quantity depending on it, in order to restrict the search space.
From this perspective, we develop an upper bound on $\mathrm{tr}(\vec{Z}^{\transp}\vec{L}^{\transp}\vec{LZ})$.
\begin{theorem}
\label{Zubthm}
With $D^2_{\min}(\vec{L}) = 1$, the following upper bound holds
\be
\label{enub}
\min_{\vec{L},\vec{Z}}\mathrm{tr}(\vec{Z}^{\transp}\vec{L}^{\transp}\vec{L}\vec{Z}) \leq N\left(\frac{s_1}{\sqrt[n]{\det(\vec{S})}}\right)^2.
\ee
\end{theorem}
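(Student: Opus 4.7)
The plan is to reduce the bound in \eqref{enub} to the already-proved upper bound in Theorem \ref{ubthm} by exploiting the fact that the factor $\vec{U}^{\transp}\vec{S}^{-2}\vec{U}$ sandwiched inside the trace of Theorem \ref{ubthm} is bounded below, in the PSD sense, by a scalar multiple of the identity. Concretely, I would let $(\vec{U}^{\star},\vec{L}^{\star},\vec{Z}^{\star})$ denote any triple achieving the minimum in Theorem \ref{ubthm}, and set $\vec{A}=\vec{U}^{\star}\vec{L}^{\star}\vec{Z}^{\star}$. Since $\vec{U}^{\star}$ is orthogonal, $\mathrm{tr}((\vec{Z}^{\star})^{\transp}(\vec{L}^{\star})^{\transp}\vec{L}^{\star}\vec{Z}^{\star})=\mathrm{tr}(\vec{A}^{\transp}\vec{A})$, so replacing $(\vec{L}^{\star},\vec{Z}^{\star})$ back into the left-hand side of \eqref{enub} we only need to bound $\mathrm{tr}(\vec{A}^{\transp}\vec{A})$.

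The key step is the inequality $\vec{S}^{-2}\succeq s_1^{-2}\vec{I}_N$, which holds because $s_1$ is by convention the largest diagonal entry of $\vec{S}$, so $s_1^{-2}$ is the smallest eigenvalue of $\vec{S}^{-2}$. Applying this PSD inequality columnwise to $\vec{A}$ gives
\begin{equation*}
\mathrm{tr}(\vec{A}^{\transp}\vec{S}^{-2}\vec{A})\;\geq\;s_1^{-2}\,\mathrm{tr}(\vec{A}^{\transp}\vec{A}),
\end{equation*}
or equivalently $\mathrm{tr}(\vec{A}^{\transp}\vec{A})\leq s_1^{2}\,\mathrm{tr}(\vec{A}^{\transp}\vec{S}^{-2}\vec{A})$. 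Invoking Theorem \ref{ubthm} for the rightmost trace then yields
\begin{equation*}
\mathrm{tr}(\vec{A}^{\transp}\vec{A})\;\leq\;s_1^{2}\cdot N(\det(\vec{S}))^{-2/N}\;=\;N\bigl(s_1/\sqrt[N]{\det(\vec{S})}\bigr)^{2},
\end{equation*}
which gives exactly the right-hand side of \eqref{enub}. Since the minimum over $\vec{L},\vec{Z}$ in \eqref{enub} is no larger than the value taken at $(\vec{L}^{\star},\vec{Z}^{\star})$, the proof is complete.

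I do not anticipate any serious obstacle: the main subtlety is the convention on ordering of the singular values (ensuring $s_1$ is the largest so that $s_1^{-2}\vec{I}_N$ is a valid PSD lower bound on $\vec{S}^{-2}$), and the innocuous-looking observation that the orthogonal $\vec{U}^{\star}$ in Theorem \ref{ubthm} drops out of $\mathrm{tr}(\vec{A}^{\transp}\vec{A})$, making it legitimate to reuse the same $(\vec{L}^{\star},\vec{Z}^{\star})$ as a feasible point for \eqref{enub}. Once these two points are in place, the bound follows in one line from Theorem \ref{ubthm}.
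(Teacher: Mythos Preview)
Your proposal is correct and follows essentially the same approach as the paper. The paper first substitutes the optimal $\vec{U}$ into the bound of Theorem~\ref{ubthm} to obtain $\mathrm{tr}(\Omega(\vec{G}_{\vec{B}})\vec{S}^{-2}) \leq N(\det(\vec{S}))^{-2/N}$ in eigenvalue form, and then uses $s_j^{-2}\geq s_1^{-2}$ termwise to get $\mathrm{tr}(\Omega(\vec{G}_{\vec{B}})\vec{S}^{-2})\geq s_1^{-2}\mathrm{tr}(\vec{G}_{\vec{B}})$; your PSD inequality $\vec{S}^{-2}\succeq s_1^{-2}\vec{I}_N$ applied to $\mathrm{tr}(\vec{A}^{\transp}\vec{S}^{-2}\vec{A})$ is the same step without the explicit eigendecomposition, and the rest is identical.
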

\begin{proof}
Let $\vec{G}_{\vec{L}} = \vec{Z}^{\transp}\vec{L}^{\transp}\vec{L}\vec{Z}$. Inserting the optimal $\vec{U}$ into \eqref{enbounds}, we arrive at the upper bound
\be
\label{enbounds2}
\mathrm{tr}(\Omega(\vec{G}_{\vec{L}})\vec{S}^{-2}) \leq N(1/\det(\vec{S}))^{2/N},
\ee
where $\Omega(\vec{G}_{\vec{L}})$ is the diagonal matrix containing the eigenvalues $\omega_j(\vec{G}_{\vec{L}})$ of $\vec{G}_{\vec{L}}$. Since the eigenvalues $\omega_j(\vec{G}_{\vec{L}})$ are sorted in opposite order to $s_j^{-2}$, and $s_1^{-2} \leq \ldots \leq s_N^{-2}$, we have the inequality $$\mathrm{tr}(\Omega(\vec{G}_{\vec{L}})\vec{S}^{-2}) \geq \frac{\mathrm{tr}(\Omega(\vec{G}_{\vec{L}}))}{s_1^2},$$ which gives us the upper bound in \eqref{enub}
\end{proof}
Geometrically, the inequality in \eqref{enub} implies that the lattice vectors of the optimal lattice $\vec{L}_{\mathrm{opt}}\vec{Z}_{\mathrm{opt}}$ must have a bounded length. Using the trace inequality \cite{KA68} $$\omega_N(\vec{G}_{\vec{L}_{\mathrm{opt}}})\mathrm{tr}(\vec{Z}_{\mathrm{opt}}\vec{Z}_{\mathrm{opt}}^{\transp}) \leq \mathrm{tr}(\vec{Z}_{\mathrm{opt}}^{\transp}\vec{L}_{\mathrm{opt}}^{\transp}\vec{L}_{\mathrm{opt}}\vec{Z}_{\mathrm{opt}}) \leq \omega_1(\vec{G}_{\vec{L}_{\mathrm{opt}}})\mathrm{tr}(\vec{Z}_{\mathrm{opt}}\vec{Z}_{\mathrm{opt}}^{\transp}),$$
we also have the following upper bound for $\vec{Z}_{\mathrm{opt}}$
\be
\label{optZbounds}
\mathrm{tr}(\vec{Z}_{\mathrm{opt}}\vec{Z}_{\mathrm{opt}}^{\transp}) \leq \frac{N}{\omega_N(\vec{G}_{\vec{L}_{\mathrm{opt}}})}\left(\frac{s_1}{\sqrt[n]{\det(\vec{S})}}\right)^2.
\ee

In terms of $\vec{G}_{\vec{B}} = \vec{Z}^{\transp}\vec{L}^{\transp}\vec{LZ} = \vec{B}^{\transp}\vec{B}$, the upper bound in \eqref{enub} is
\be
\label{enub2}
\mathrm{tr}(\vec{G}_{\vec{B}}) \leq N\left(\frac{s_1}{\sqrt[n]{\det(\vec{S})}}\right)^2.
\ee
Let $\mathrm{ub}(\vec{S})$ denote the upper bound in \eqref{enub2}. Hence, the optimal $\vec{G}_{\vec{B}}$ in the Ryshkov polytope that solves \eqref{ryshproblatt} is one of the vertices of the finite, bounded polytope $\mathcal{R}_1 \cap \{\vec{G}_{\vec{B}}\,:\,\mathrm{tr}(\vec{G}_{\vec{B}}) \leq \mathrm{ub}(\vec{S})\}$.

\section{Numerical Methods}
\label{optZsec}
In this section, we present algorithmical approaches to solve \eqref{ryshproblatt} and \eqref{minkproblatt}, along with a novel suboptimal precoder construction based on Theorem \ref{thm2}. Beside this, by applying the knowledge that perfect forms solve \eqref{ryshproblatt}, we provide a numerical example where the densest lattice packing is not a solution to \eqref{ryshproblatt}. This shows that the solution to \eqref{prob2} is somewhat counter-intuitive: The optimal packing of points at the receiver, does not always minimize the total energy of the lattice points at the transmitter.

\subsection{Methods to solve \eqref{ryshproblatt} and \eqref{minkproblatt}}
Although \eqref{ryshproblatt} can be solved by enumerating all vertices in the polytope $\mathcal{R}_1 \cap \{\vec{G}\,:\,\mathrm{tr}(\vec{G}) \leq \mathrm{ub}(\vec{S})\}$, the methodology for solving \eqref{minkproblatt} will provide another interesting observation. Additionally, the problem formulation in \eqref{minkproblatt} gives novel insight into efficient suboptimal precoder construction, not present in the formulation in \eqref{ryshproblatt}. We divide this section into two parts: First we discuss a method to solve \eqref{minkproblatt}, then we discuss the solution to \eqref{ryshproblatt}. 
 
\subsubsection{Finding the solution to \eqref{minkproblatt}}
\label{minkprobsolsec}
To find the solution to \eqref{minkproblatt}, one needs to tabulate Minkowski extreme lattices in $N$ dimensions. Unfortunately, it turns out to be more complex to enumerate Minkowski extreme lattices than perfect forms \cite{Achill}. Note, however, that only those Minkowski extreme lattices that correspond to perfect forms have to be known. Namely, once all non-isometric perfect forms have been tabulated in $N$ dimensions, the fact that each perfect form (vertex) in $\mathcal{R}_1$ is equivalent to a Minkowski extreme form (vertex) in $\mathcal{M}_1$, implies that the Minkowski extreme lattices solving \eqref{minkproblatt} are the ones corresponding to the non-isometric perfect forms. Therefore, it is not necessary to know \emph{all} the Minkowski extreme lattices in $N$ dimensions in order to solve \eqref{minkproblatt}, only those corresponding to perfect forms are needed. However, when constructing a good suboptimal solution to \eqref{minkproblatt}, presented in Section \ref{subprecsec}, it is necessary to know all the Minkowski extreme lattices to obtain the best suboptimal construction. 
The smallest eigenvalue $\omega_N(\vec{G}_{\vec{L}})$ in \eqref{optZbounds} is non-zero for all the Minkowski extreme lattices $\vec{L}$ (candidates for the optimum), and thus the bound in \eqref{optZbounds} is well-defined. A geometrical interpretation is that this inequality bounds the squared lengths sum of the basis vectors in the integer lattice $\mathbb{Z}^{N}$, where the basis vectors are now the rows of $\vec{Z}$. Thus, finding the optimal $\vec{Z}$ can be regarded as searching for basis vectors inside a sphere of a certain radius. 
If one has a priori knowledge about the maximum ratio $s_1/\det(\vec{S})$, an off-line, one-shot algorithm can be formulated that searches for unimodular $\vec{Z}$ inside the largest sphere, corresponding to the Minkowski extreme lattice $\vec{L}$ with smallest $\omega_N(\vec{G}_{\vec{L}})$ and the channel $\vec{S}$ with largest upper bound in \eqref{optZbounds}. This sphere certainly includes the optimal $\vec{Z}_{\mathrm{opt}}$ corresponding to the optimal Minkowski extreme lattice $\vec{L}_{\mathrm{opt}}$ for any channel that can occur. A large codebook of matrices $\vec{LZ}$ can then be constructed off-line, by multiplying each encountered $\vec{Z}$ in the sphere with the different Minkowski extreme lattices and storing the resulting matrices into the codebook. To then find the optimal precoder $\vec{F} = \vec{S}^{-1}\vec{ULZ}$ online for a certain $\vec{S}$, one simply goes through every element $\vec{LZ}$ in the codebook, and constructs $\vec{F}$ by using the optimal $\vec{U}$. 

The outlined method to solve \eqref{minkproblatt} includes the following steps: 1) Find all Minkowski extreme lattices that correspond to non-isometric perfect forms. This is accomplished by applying Voronoi's algorithm to enumerate non-isometric perfect forms \cite{Voronoi07}, and then applying the Minkowski reduction algorithm to the obtained perfect lattices. 2) Enumerate all unimodular $\vec{Z}$ satisfying the bounds in \eqref{optZbounds}. There are specialized algorithms for this task \cite{GPRE95}.

\subsubsection{Finding the solution to \eqref{ryshproblatt}}
As described in Section \ref{boundssec}, the optimal $\vec{G}$ is one of the vertices in the polytope $\mathcal{R}_1 \cap \{\vec{G}\,:\,\mathrm{tr}(\vec{G}) \leq \mathrm{ub}(\vec{S})\}$. Hence, one method to find the optimum is to directly enumerate all the perfect forms inside the polytope. 
A finite codebook can be constructed off-line if a priori knowledge of the upper bound in \eqref{enub2} is available. A method to enumerate perfect forms is via Voronoi's algorithm \cite{Voronoi07}. It enumerates perfect forms and stops when all non-isometric forms have been found. As mentioned, for today's computers, it is only usable up to 8 dimensions due to the large number of edges in the Ryshkov polytope in higher dimensions. We need to slightly modify the classical Voronoi's algorithm, by changing its stopping condition. Since we are interested in forms that are isometric, our stopping condition is based on the upper bound in \eqref{enub2}. Further, we note that transforming $\vec{G}_{\vec{B}}$ to $\vec{\Pi}^{\transp}\vec{G}_{\vec{B}}\vec{\Pi}$, where $\vec{\Pi}$ is a generalized permutation matrix $\vec{\Pi}$ such that each non-zero element in $\vec{\Pi}$ is either $1$ or $-1$, does not change the value of the objective function in \eqref{ourproblatt2}, since the eigenvalues of $\vec{G}_{\vec{B}}$ are still the same. Moreover, this does not change the constraint region in \eqref{ourproblatt2} since it merely permutes the error vectors. For our algorithm, this observation implies that $\vec{G}$ matrices such that $\vec{G} = \vec{\Pi}^{\transp}\hat{\vec{G}}\vec{\Pi}$, for some already visited perfect form $\hat{\vec{G}}$, do not have to be traversed by the algorithm; thus, the search space can be reduced. The algorithm needs an initial perfect form as starting position, and a good starting point is the root lattice $A_N$ \cite{Achill}. The following notation is used in the algorithm. $\mathrm{Min}(\vec{G})$ denotes the set of minimum vectors of $\vec{G}$, i.e., the set $\{\vec{x}:\,\,\vec{x}^{\transp}\vec{G}\vec{x} = 1\}$ and $\vec{G}[\vec{x}] \stackrel{\triangle}{=} \vec{x}^{\transp}\vec{G}\vec{x}$. The algorithm is summarized by the pseudo-code in Table \ref{alg1}.
\begin{table}
\begin{minipage}[c]{\linewidth}
 \begin{center}
 \begin{tabular}{ p{84mm} }
 \hline\hline
\textbf{Algorithm 1}\\
\textsc{Input: A starting perfect form $\vec{G}_s$, e.g., the root lattice $A_N$.} \\ \hline 
\textsc{Output: The list of $\vec{G}$ matrices corresponding to the vertices in the polytope $\mathcal{R}_1 \cap \{\vec{G}\,:\,\vec{G} \leq \mathrm{ub}(\vec{S})\}$} \\
\hline \hline
Let $\vec{G} = \vec{G}_s$ and define the boolean variable $b_{\vec{G}} \stackrel{\triangle}{=} 1$. Save the pair $(\vec{G},b_{\vec{G}})$ in a set $\mathcal{G} = \{(\vec{G},b_{\vec{G}})\}$. 
\begin{enumerate}
\item Compute $\mathrm{Min}(\vec{G})$ and the extreme rays (edges) $\vec{T}_1,\ldots,\vec{T}_k$ of the polyhedal cone $$\{\vec{G}' \in \mathcal{S}^{N\times N}:\,\,\vec{G}'[\vec{x}] \geq 0 \,\, \forall \vec{x} \in \mathrm{Min}(\vec{G})\}.$$ 
\item Determine neighbouring perfect forms $\vec{G}_i$ as $\vec{G}_i = \vec{G} + \alpha \vec{T}_i$, $i = 1\ldots k$.
\item Let $\vec{G}_{j_1},\ldots \vec{G}_{j_m}$ be those neighbouring forms satisfying the upper bound in \eqref{enub2} and that cannot be expressed as $\vec{\Pi}^{\transp}\hat{\vec{G}}\vec{\Pi}$ for some $(\hat{\vec{G}},b_{\hat{\vec{G}}})$ in $\mathcal{G}$. Define $b_{\vec{G}_{j_l}} \stackrel{\triangle}{=} 0$, $l = 1 \ldots m$ and let $\mathcal{G} = \mathcal{G} \cup \{(\vec{G}_{j_1},b_{\vec{G}_{j_1}}),\ldots,(\vec{G}_{j_m},b_{\vec{G}_{j_m}})\}$.
\item Find a pair $(\vec{G}_j,b_{\vec{G}_j})$ in $\mathcal{G}$ such that $b_{\vec{G}_j} = 0$. If such a pair exists, change the value of $b_{\vec{G}_j}$ to $b_{\vec{G}_j} = 1$, let $\vec{G} = \vec{G}_j$ and go to step 1. Otherwise, stop and return $\mathcal{G}$.
\end{enumerate}
\\\hline 
\end{tabular}
 \end{center}
 \caption{An algorithm that traverses all perfect forms satisfying the bounds in \eqref{enub2}.}
\label{alg1} \end{minipage}
 \end{table}
The only difference between Algorithm 1 and the Voronoi algorithm presented in \cite{Voronoi07} is the stopping condition and search space reduction through generalized permutation matrices. Voronoi's algorithm stops as soon as all neighbouring perfect forms of a certain perfect form are isometric to some other perfect form already encountered. Algorithm \ref{alg1} stops as soon as all perfect forms (up to a generalized permutation matrix) satisfying the upper bound in \eqref{enub2} have been enumerated.

The Fincke-Pohst algorithm is used to compute $\mathrm{Min}(\vec{G})$ \cite{FP85}. The toughest part of the algorithm is to compute the extreme rays of a polytope specified by linear inequalities. This is the bottleneck of enumerating non-isometric perfect forms with Voronoi's algorithm and thereby solving the lattice packing problem in high dimensions. There exist methods that do this in $O(Nvd)$ time, where $N$ is the dimension, $d$ the number of non-redundant inequalities describing the polytope and $v$ is the number of vertices in the polytope \cite{AF99}.

For step 2, determining the neighbouring perfect forms can be done by the algorithm in \cite{Achill}[Algorithm 2, Chapter 3], which computes the $\alpha$ needed in step 2. In the other steps, we use boolean variables $b_{\vec{G}_j}$ to denote whether a vertex has been visited or not.

\subsection{Suboptimal Precoder Construction}
\label{subprecsec}
Finding the optimal solution is a computationally demanding task for today's computers, and suboptimal solutions are of interest. We base our suboptimal construction on $\vec{F}_{\mathrm{gmd}}$ from Section \ref{boundssec}.

It can be numerically verified that for $\vec{F}_{\mathrm{gmd}}$, the received lattice is not a Minkowski extreme lattice, and is thereby not optimal. Hence, the performance of $\vec{F}_{\mathrm{gmd}}$ can be improved by applying the result of Theorem \ref{thm2}. Let $\vec{B}_{\mathrm{gmd}} = \vec{S}\vec{F}_{\mathrm{gmd}}$ be the received lattice at the receiver, where $\vec{F}_{\mathrm{gmd}}$ is scaled so that $D^2_{\min}(\vec{B}_{\mathrm{gmd}}) = 1$. Now perform a Minkowski reduction on $\vec{B}_{\mathrm{gmd}}$ by using the Minkowski reduction algorithm \cite{Min05}, so that we can factor the basis matrix as $\vec{B}_{\mathrm{gmd}} = \vec{U}_{\mathrm{gmd}}\vec{L}_{\mathrm{gmd}}\vec{Z}_{\mathrm{gmd}}$ for some rotation $\vec{U}_{\mathrm{gmd}}$, Minkowski reduced lattice basis $\vec{L}_{\mathrm{gmd}}$ with $D^2_{\min}(\vec{L}_{\mathrm{gmd}}) = 1$, and unimodular $\vec{Z}_{\mathrm{gmd}}$. It then follows that $\vec{F}_{\mathrm{gmd}} = \vec{S}^{-1}\vec{U}_{\mathrm{gmd}}\vec{L}_{\mathrm{gmd}}\vec{Z}_{\mathrm{gmd}}$. Define $\vec{F}_{m,i} \stackrel{\triangle}{=} \vec{S}^{-1}\vec{U}_{i}\vec{L}_{m,i}\vec{Z}_{\mathrm{gmd}}$, $i = 1 \ldots K$, to be the $K$ different precoders where $\vec{L}_{m,i}$ is the $i$:th Minkowski extreme lattice with $D^2_{\min}(\vec{L}_{m,i}) = 1$, and $\vec{U}_i$ is the left orthogonal matrix of $\vec{L}_{m,i}\vec{Z}_{\mathrm{gmd}}$. Applying Theorem \ref{thm2}, we know that $\mathrm{tr}(\vec{F}_{m,j}^{\transp}\vec{F}_{m,j}) < \mathrm{tr}(\vec{F}_{\mathrm{gmd}}^{\transp}\vec{F}_{\mathrm{gmd}})$ for some $1 \leq j \leq K$. Thus, $\vec{F}_{m,j}$ is a precoder performing better than the geometric mean precoder.

Hence, by performing a Minkowski reduction and using the resulting unimodular matrix together with one of the Minkowski extreme lattices, it is possible to improve upon the geometric mean precoder and reach closer to the lower bound given in \eqref{enbounds}. However, performing a Minkowski reduction includes finding the shortest basis vector in the lattice, which is an NP-hard problem \cite{Micciancio}. Nevertheless, it turns out to be easily doable with a standard workstation at least for $N \lesssim 15$.
Another method that can be used for this purpose is the iterative algorithm presented in \cite{Svante}.

\subsection{Packing lattices are not always a solution to \eqref{ryshproblatt}}
\label{denscontrsec}
Note that a perfect form $\vec{G}$ that is a candidate for solving \eqref{ryshproblatt} must satisfy the upper bound \eqref{enub}. Since $\vec{e}^{\transp}\vec{G}\vec{e} \geq 1$, $\forall \vec{e} \in \mathbb{Z}^N/\{\vec{0}_N\}$, it holds that $g_{i,i} \geq 1$, $i = 1 \ldots N$, and thus $\mathrm{tr}(\vec{G}) \geq N$. Let $\lambda_i$ denote the $i$:th shortest vector in the lattice $\vec{L}$ with Gram matrix $\vec{G}$. By definition, the minimum distance is $\lambda_1 = 1$. Now assume a channel $\vec{S}$ such that the upper bound in \eqref{enub} is smaller than $(N-1)\lambda_1 + \lambda_2$. If $\vec{Z}^{\transp}\vec{G}\vec{Z}$ is another perfect form isometric to $\vec{G}$ that is also a candidate for solving \eqref{ryshproblatt}, then $\mathrm{tr}(\vec{Z}^{\transp}\vec{G}\vec{Z}) \leq (N-1)\lambda_1 + \lambda_2$. However, this inequality significantly limits the number of possible $\vec{Z}$ matrices and thus the number of perfect forms isometric to $\vec{G}$. Namely, since $\mathrm{tr}(\vec{Z}^{\transp}\vec{G}\vec{Z}) = \sum_{j=1}^N \vec{Z}_j^{\transp}\vec{G}\vec{Z}_j$ and each term $\lambda_1 \leq \vec{Z}_j^{\transp}\vec{G}\vec{Z}_j \leq \lambda_2$, it follows that each $\vec{Z}_j$ must correspond to a minimum vector of $\vec{G}$, i.e., $\vec{Z}_j$ belongs to the set $\mathrm{Min}(\vec{G})$. 

Let us apply this idea to 4-dimensional lattices. In 4 dimensions, there are only two non-isometric perfect forms, the $D_4$ and $A_4$ lattice. A Gram matrix for $D_4$\footnote{Gram matrices for non-isometric perfect forms can be found at \cite{Sloanehomepage}.} is 
\be
\label{GD4}
\vec{G}_{D_4} = \left(\begin{array}{cccc}
1 & 0 & 0.5 & 0 \\
0 & 1 & -0.5 & 0 \\
0.5 & -0.5 & 1 & -0.5 \\
0 & 0 & -0.5 & 1
\end{array}\right)
\ee
and for $A_4$, 
\be
\label{GA4}
\vec{G}_{A_4} = \left(\begin{array}{cccc}
1 & -0.5 & 0 & 0 \\
-0.5 & 1 & -0.5 & 0 \\
0 & -0.5 & 1 & -0.5 \\
0 & 0 & -0.5 & 1
\end{array}\right).
\ee
Hence, any perfect form in 4 dimensions can be expressed as either $\vec{Z}^{\transp}\vec{G}_{D_4}\vec{Z}$ or $\vec{Z}^{\transp}\vec{G}_{A_4}\vec{Z}$ for some unimodular $\vec{Z}$.
Further, it holds that $\lambda_1 = 1$ and $\lambda_2 = 2$ for both $\vec{G}_{A_4}$ and $\vec{G}_{D_4}$.
Now let $\vec{S}$ be 
\be
\label{Sexp}
\vec{S} = \left(\begin{array}{cccc}
1 & 0 & 0 & 0 \\
0 & 0.95 & 0 & 0 \\
0 & 0 & 0.94 & 0 \\
0 & 0 & 0 & 0.93
\end{array}\right).
\ee
The upper bound in \eqref{enub} is 4.83 for this $\vec{S}$. Hence, if a perfect form $\vec{Z}^{\transp}\vec{G}_{D_4}\vec{Z}$ isometric to $\vec{G}_{D_4}$ solves \eqref{ryshproblatt}, then the columns of $\vec{Z}$ must be taken from $\mathrm{Min}(\vec{G}_{D_4})$. Similarly, if a perfect form isometric to $\vec{G}_{A_4}$ solves \eqref{ryshproblatt}, then the columns of the corresponding unimodular $\vec{Z}$ are taken from $\mathrm{Min}(\vec{G}_{A_4})$. It is an easy task to find $\mathrm{Min}(\vec{G}_{D_4})$ and $\mathrm{Min}(\vec{G}_{A_4})$, by applying the Fincke-Pohst algorithm, and also to find all unimodular matrices whose columns consist of these minimum vectors. Going through each perfect form obtained from these unimodular matrices, and plugging in the optimal precoder \eqref{optFconstr} into \eqref{ryshproblatt}, the result is that the perfect form isometric to $\vec{G}_{A_4}$ gives the smallest value of the objective function in \eqref{ryshproblatt}. Repeating the same argument for the channel
\be
\label{Sexp2}
\vec{S} = \left(\begin{array}{cccc}
1 & 0 & 0 & 0 \\
0 & 0.99 & 0 & 0 \\
0 & 0 & 0.94 & 0 \\
0 & 0 & 0 & 0.93
\end{array}\right),
\ee
one concludes that the perfect form isometric to $\vec{G}_{D_4}$ solves \eqref{ryshproblatt}. Hence, this shows that both $A_4$ and $D_4$ occur as optimal lattice structures at the receiver; which one it is, depends on the channel $\vec{S}$. 
\section{Conclusions}
\label{conclsec}
This work studies precoding over non-singular linear channels with full CSI through a lattice-theoretic approach. The classical complex-valued linear channel is first transformed to a more general real-valued model which enables performance improvements over the classical complex-valued model. Then, the main problem studied in the work is to find lattices that maximize the minimum distance between the received lattice points, under an average energy constraint at the transmitter. The optimal lattice is analytically shown to be a perfect lattice, as defined by Ryshkov, for \emph{any} given non-singular channel. Bounds on the optimal performance are developed, tighter than previously reported, which enable construction of algorithms that produce a finite codebook of matrices, from which the optimal precoder can be derived. Furthermore, a suboptimal precoder construction is presented together with bounds on its performance, which is analytically shown to improve upon a previous presented precoding scheme in the literature, by utilizing the new results in this work. In addition to this, we demonstrate with an example that optimal packing lattices are not always optimal for maximizing minimum distance, which is a counter-intuitive result at first sight. An immediate practical application of the derived results is precoding over large alphabets.
\label{concl}
 
\end{document}